\theoremstyle{definition}
\newtheorem{thm}{Theorem}
\newtheorem{definition}{Definition}
\newtheorem{lem}{Lemma}
\newcommand*{\QEDA}{\hfill\ensuremath{\blacksquare}}%
\renewenvironment{proof}[1][\proofname] {\par\pushQED{\qed}\normalfont\topsep6\p@\@plus6\p@\relax\trivlist\item[\hskip\labelsep\bfseries#1\@addpunct{.}]\ignorespaces}{\popQED\endtrivlist\@endpefalse}
\begin{document}
\title{Fundamental Limits of Covert Packet Insertion}

\author{
	\IEEEauthorblockN{Ramin Soltani\IEEEauthorrefmark{1},
		Dennis Goeckel\IEEEauthorrefmark{1}, Don Towsley\IEEEauthorrefmark{2}, and Amir Houmansadr\IEEEauthorrefmark{2}}
	
	\IEEEauthorblockA{\IEEEauthorrefmark{1}Electrical~and~Computer~Engineering~Department,~University~of~Massachusetts,~Amherst,
		\{soltani, goeckel\}@ecs.umass.edu\\}
	\IEEEauthorblockA{\IEEEauthorrefmark{2}College of Information and Computer Sciences, University of Massachusetts, Amherst,
		\{towsley, amir\}@cs.umass.edu}
	
	\thanks{ This work has been supported by the National Science Foundation under grants ECCS-1309573 and CNS-1525642, and appeared, in part, at the Allerton Conference on Communications, Control, and Computing in 2015 \cite{soltani2015covert} and 2016 \cite{soltani2016allerton}.}	
	 \thanks{This work has been submitted to the IEEE for possible publication. Copyright may be transferred without notice, after which this version may no longer be accessible.}
}

\markboth{IEEE Transactions on Communications}%
{Submitted paper}
\maketitle
\begin{abstract}
Covert communication conceals the existence of the transmission from a watchful adversary. We consider the fundamental limits for covert communications via packet insertion over packet channels whose packet timings are governed by a renewal process of rate $\lambda$. Authorized transmitter Jack sends packets to authorized receiver Steve, and covert transmitter Alice wishes to transmit packets to covert receiver Bob without being detected by watchful adversary Willie. Willie cannot authenticate the source of the packets. Hence, he looks for statistical anomalies in the packet stream from Jack to Steve to attempt detection of unauthorized packet insertion. First, we consider a special case where the packet timings are governed by a Poisson process and we show that Alice can covertly insert $\mathcal{O}(\sqrt{\lambda T})$ packets for Bob in a time interval of length $T$; conversely, if Alice inserts $\omega(\sqrt{\lambda T})$, she will be detected by Willie with high probability. Then, we extend our results to general renewal channels and show that in a stream of $N$ packets transmitted by Jack, Alice can covertly insert $\mathcal{O}(\sqrt{N})$ packets; if she inserts $\omega(\sqrt{N})$ packets, she will be detected by Willie with high probability. 
\end{abstract}
\textbf{	Keywords}: Covert Packet Insertion, Covert Packet Communication, Covert Wired Communication, Covert Channel, Low Probability of Detection, LPD, Network Security, Information Theory.

\IEEEpeerreviewmaketitle

\section{Introduction}
\IEEEPARstart{P}{rivacy} and security have become crucial issues in daily life as the use of communication systems has increased (e.g. telephone, email, social media) ~\cite{lopez2008wireless,nazanin_ISIT2017
	,hadian2016privacy,takbiri2018asymptotic
	hadian2018privacy,nichols2001wireless
}. Information theoretic secrecy~\cite{bloch11pls} and encryption~\cite{talb2006} protect the secrecy of message contents; however, these techniques do not satisfy the security and privacy requirements of users in many scenarios. Recently, the need for another level of secrecy was highlighted by the Snowden disclosures \cite{snowden}: users of a communication system often need not only secrecy for the contents of their messages, but also for hiding the existence of their communication. As a solution, covert communication ensures that a watchful adversary is not able to detect whether communication is taking place or not. Two applications of covert communication are the removal of the ability to track daily user activities and to hide the presence of military activities.

Steganography~\cite{ker07pool} is utilized to covertly embed information into an overt message on a digital (and typically noiseless) channels. Alternatively, spread spectrum methods~\cite{simon1994spread} provide covert communication on noisy channels. Information-theoretic limits of covert communications only recently gained attention first with the study of additive white Gaussian (AWGN) channels \cite{bash_isit2012,bash_jsac2013}, which was later extended to provide a comprehensive characterization of the limits of covert communication over discrete memoryless channels (DMCs), optical channels, and AWGN channels \cite{bash2015hiding,goeckel2016covert,bash_isit2013,  soltani2014covert,soltani2018covert, jaggi_isit2014,bloch2016covert,wang15covert-isit,lee14posratecovert, jaggi_uncertain,arumugam2018covert,8278022,tahmasbi2018covert}.

In this paper, we extend the work in~\cite{bash2015hiding,goeckel2016covert,bash_isit2013,  soltani2014covert,soltani2018covert, jaggi_isit2014,bloch2016covert,wang15covert-isit,lee14posratecovert, jaggi_uncertain,arumugam2018covert,8278022,tahmasbi2018covert} to packet processes typical of wired computer networks. In computer networks, covert channels can be divided into two major categories ~\cite{NIST:NCSC}: \emph{covert storage channels} and \emph{covert timing channels}. A covert storage channel involves the writing of a shared storage location by one process and reading of it by another; e.g. modifying headers of packets \cite{handel1996hiding,rowland1997covert,giffin2002covert,murdoch2005embedding}. Alternatively, a covert timing channel involves the exchange of information between two users by manipulation of timings of some shared resources; e.g. embedding information packet timings first explored by Girling~\cite{Girling87} and later studied by many others ~\cite{Cabuk:2004:ICT:1030083.1030108,berk2005detection,shah2006keyboards,cabuk2006network,liu2009hide,liu2010robust,houmansadr2011coco,Arch2012,liu2012network}. This includes applications of covert channels in TCP/IP!\cite{murdoch2005embedding,sekke2009,mazurczyk2012lost}, VoIP~\cite{mazurczyk2008covert}, LTE-A~\cite{rezaei2013analysis},
BitTorrent~\cite{cunche2014asynchronous}, and establishment of a covert communication over IPV4~\cite{Cabuk:2004:ICT:1030083.1030108,ahsan2002practical} and IPV6~\cite{lucena2005covert} have been studied.

Considerable work has focused on detection of covert  channels~\cite{berk2005detection, gianvecchio2007detecting,rezaei2017towards,rezaei2015novel,helouet2011covert,gianvecchio2007detecting,Zhao2015,kemmerer1983shared} as well as eluding detection by leveraging the statistical properties of the legitimate channel~\cite{gianvecchio2008model}. Moreover, significant research has been performed on quantifying and optimizing the capacity of covert channels~\cite{millen1987covert, millen1989finite,Anand1998inf,berk2005detection,martin2006noisy,sekke2007capacity,sekke2009,Mosko92Capac,Moskowitz1994} by leveraging information-theoretic analysis and the use of various coding techniques \cite{kiyavash2009J,kiyavash2009A,Arch2012}. In particular, Anantharam and Verdu~\cite{verdubitsq} derived the Shannon capacity of the timing channel with a single-server queue, and Dunn~\cite{dunn2009} analyzed the secrecy capacity of such a system. 

Per above, here we take a fundamental approach analogous to~\cite{bash_isit2012,bash_jsac2013,bash2015hiding,goeckel2016covert,bash_isit2013, soltani2014covert,soltani2018covert, jaggi_isit2014,bloch2016covert,wang15covert-isit,lee14posratecovert, jaggi_uncertain}, but turn our attention to covert communication over wired channels in which communication takes place through packet transmissions. Specifically, we consider the scenario shown in Fig.~\ref{fig:SysMod2}. Authorized transmitter Jack sends packets to authorized receiver Steve. Assume that Alice wishes to transmit data covertly to Bob on this channel in the presence of an adversary, Willie, who is monitoring the channel. Willie can be in one of the two locations, either between Alice and Bob (Setting 1), or between Bob and Steve (Setting 2). Alice and Bob know that Willie is located at one or the other of these two places; however, they do not know which place he is located at. We assume Willie cannot authenticate the source of the packets (e.g., whether they are sent by Jack or not). However, he knows the statistical model of the timings of the packets transmitted by Jack. Alice can buffer and release Jack's packets and insert her own packets. Also, Bob can authenticate packets, remove the ones originally inserted by Alice, and buffer and release Jack's packets. We assume Alice can only send information to Bob by inserting her own packets into the channel, since she is not allowed to share a secret codebook with Bob and thus she is not able to send covert messages to Bob via packet timings; i.e., altering the timing of the packets according to a shared codebook, to embed information in inter-packet delays (IPDs)~\cite{verdubitsq}. In addition, transmission of information through packet timings is sensitive to natural network noise and thus is not applicable in scenarios where timing noise alters the transmitted packet timings (codeword) such that the receiver is not able to decode the message with the required decoding error (e.g. complex channels where the packet streams are first mixed and then separated). We answer this question: how many packets can Alice transmit to Bob without being detected by Willie? 

We consider two statistical models for the timing process of Jack's transmitted packets. First, we analyze a Poisson channel (Assumption 1)~\cite{soltani2015covert}; i.e., IPDs of Jack's transmitted stream are modeled by independent and identically distributed (i.i.d.) exponential random variables with mean $\lambda^{-1}$, and Willie is aware of this. Therefore, Willie seeks to verify whether the packet process has the proper characteristics. We exploit the fact that the superposition of two independent Poisson processes is a Poisson process: Alice generates a Poisson process of low enough rate and uses it to govern the times at which she inserts the covert packets into the Jack-to-Steve channel. We assume Willie is aware of Alice's transmission strategy (insertion scheme, rate, etc.) as well as what Bob can do, if they choose to communicate with each other.

Covertness as defined formally in Section II requires that Willie's decision on whether Alice transmits or not be arbitrarily close to random guessing. In Theorem~\ref{poissonins}, we show that Alice can transmit $\mathcal{O} \left(\sqrt{\lambda T}\right)$ packets covertly to Bob in a time interval of length $T$. Conversely, we prove that if Alice transmits ${\omega}\left(\sqrt{\lambda T}\right)$ packets during a time interval of length $T$, she will be detected by Willie with high probability. 
\begin{figure} 

	\centering
	\begin{subfigure}[b]{\textwidth}
		\includegraphics[width=\textwidth,height=\textheight,keepaspectratio]{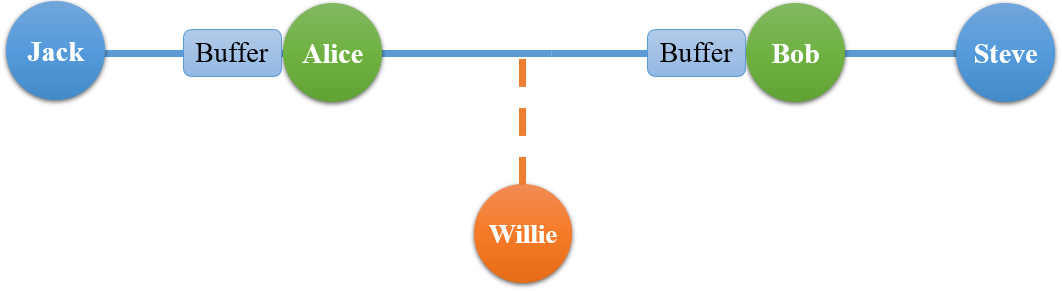}
		\caption{Setting 1}
		\label{fig:SysMod2a} 
	\end{subfigure}
	
	\begin{subfigure}[b]{\textwidth}
		\includegraphics[width=1\linewidth]{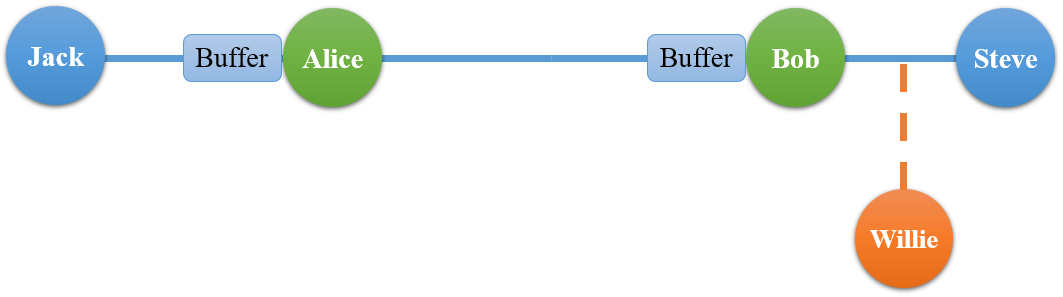}
		\caption{Setting 2}
		\label{fig:SysMod2b}
	\end{subfigure}
	\caption[.]{System configuration: Authorized transmitter Jack sends packets to authorized receiver Steve, and covert transmitter Alice uses the same channel to send her own packets to Bob without being detected by warden Willie, who cannot authenticate the source of the packets. Willie is aware of the statistical model of the timing of the packets sent by Jack. Willie can be in one of the two locations: (a) Setting 1: between Alice and Bob; or (b) Setting 2: between Bob and Steve. Willie's location (i.e. the setting) is not known to Alice or Bob.}\label{fig:SysMod2}
\end{figure}
Next, we extend the Poisson channel to a renewal channel~\cite{soltani2016allerton} (Assumption 2), where the timings of Jack's transmitted packets are modeled by a renewal process; i.e., IPDs of Jack's transmitted stream are modeled by i.i.d. random variables with probability density function (pdf) $p(x)$ and transmission rate $\lambda=\left(\int_{x=0}^{\infty}x p(x)dx\right)^{-1}$ packets per second, and Willie is aware of these characteristics. Therefore, Willie seeks to verify whether the packet process has the proper properties. Since the superposition of two independent renewal processes is a not generally a renewal process, we use a technique different from the one employed in the Poisson channel.

The remainder of the paper is organized as follows. In Section~\ref{modcons}, we present the system model and definitions employed. We provide constructions and their analysis for the Poisson channel in Section~\ref{sec:p1}, and we analyze the renewal channel in Section~\ref{sec:r1}. Section~\ref{sec:dis1} contains the discussion of the results, and Section~\ref{sec:con1} summarizes our results. 
\section{System Model and Definitions}
\label{modcons}

\subsection{System Model} 
As shown in Fig.~\ref{fig:SysMod2}, Jack transmits packets to Steve while a watchful warden Willie observes the packet flow from his vantage point. Willie does not have access to the contents of the packets and therefore cannot authenticate whether a packet is originally transmitted by Jack, or generated and inserted by Alice. Instead, based on the timings of the packets, Willie attempts to discern any irregularities that might indicate that someone is inserting packets into the channel. Alice's goal is to insert her own packets in the stream of the packets sent by Jack so as to communicate covertly with Bob. Willie's location is fixed; he is either between Alice and Bob (Setting 1 shown in Fig.~\ref{fig:SysMod2a}), or he is between Bob and Steve (Setting 2 shown in Fig.~\ref{fig:SysMod2b}), and Alice and Bob are unaware of his location. 

Alice communicates with Bob by sending her packets into the channel, but Alice and Bob do not share a secret, thus preventing the distribution of a secret codebook to communicate via packet timings~\cite{verdubitsq,soltani2015covert,soltani2016allerton}. Alice can also buffer and release Jack's transmitted packets. Bob can authenticate, receive and remove packets originally inserted by another party. He is also allowed to buffer and release Jack's transmitted packets.  We assume Willie knows the characteristics of Alice's potential insertion scheme (rate, method of insertion, etc.) and Bob's capabilities. We denote the IPDs of the packets departing Jack, Alice, and Bob by $\{A_1^{(J)}, A_2^{(J)},\ldots\}$, $\{A_1^{(A)}, A_2^{(A)},\ldots\}$, and $\{A_1^{(B)}, A_2^{(B)},\ldots\}$, respectively.

We consider two sets of assumptions regarding the timing process of Jack's packets:

{\em Assumption 1 (Poisson channel model)} Transmission times for the packets generated by Jack are modeled by a Poisson process with parameter $\lambda$; i.e., IPDs of Jack's transmitted stream are i.i.d. random variables with pdf $p(x)=\lambda e^{-\lambda x}$, and Jack's packet transmission rate $\lambda$ is known to both Alice and Willie. 

{\em Assumption 2 (Renewal channel model):} Transmission times of the packets transmitted by Jack are modeled by a renewal process; i.e., IPDs of Jack's transmitted stream are positive i.i.d. random variables with pdf $p(x)$ and Jack's transmission rate is $\lambda=\left(\int_{0}^{\infty}x p(x)dx\right)^{-1}$. Both Willie and Alice know $p(x)$ and $\lambda$. 

When IPDs $A_1, A_2, \ldots$ are samples of $f(x)$ and modeled by a renewal process, the arrival times are $\tau_{f}(1), \tau_{f}(2), \ldots$, where
\begin{align}
\label{def6e1} \tau_{f}(i)=\sum_{j=1}^{i}A_j, 
\end{align}
\noindent and the total number of arrivals within the interval $[0,t]$ is $X_{f}(t) = \sup \left\{i: \tau_{f}(i) \leq t\right\}$. Observe:
\begin{align}
\label{def6e3} \{\tau_{f}(i) \leq t\} = \{X_{f}(t) \geq i\}.
\end{align}
\noindent For a Poisson process, ($f(x)=\lambda e^{-\lambda x}$), we omit the subscripts of $\tau_{f}(i)$ and $X_{f}(t)$.

\subsection{Definitions}
Willie is faced with a binary hypothesis test: the null hypothesis $H_0$ corresponds to the case that Alice does not transmit, and the alternative hypothesis $H_1$ corresponds to the case that Alice transmits. We denote the distributions of IPDs that Willie observes by $\mathbb{P}_{1}$ and $\mathbb{P}_{0}$ under $H_1$ and $H_0$, respectively. 

We denote by $\mathbb{P}_{FA}$ the probability of rejecting $H_0$ when it is true (type I error or false alarm), and $\mathbb{P}_{MD}$ the probability of rejecting $H_1$ when it is true (type II error or missed detection). 
Willie uses classical hypothesis testing and seeks to minimize $\mathbb{P}_{FA} + \mathbb{P}_{MD}$. 

Similar to the definition of covertness in~\cite{bash_isit2013, soltani2014covert,soltani2015covert,soltani2016allerton,soltani2018covert,soltani2018allerton,soltani2019dissertation}, and invisibility in~\cite{soltani2017towards,soltani2018fundamental}, we define covertness:

\begin{definition} (Covertness) Alice and Bob's communication is {\em covert}, if and only if Willie's sum of probabilities of error $\mathbb{P}_{FA}+\mathbb{P}_{MD}$ is lower bounded by $1-\epsilon$ for any $\epsilon>0$~\cite{bash_jsac2013}, for Willie in each of his possible locations.
\end{definition}

We present results under the assumption that $\mathbb{P}(H_0)=\mathbb{P}(H_1)=1/2$. However, this results in covertness for the general case~\cite[Appendix A]{soltani2018fundamentalbit}.



We use standard ``Big O'',``Little Omega'', and ``Big Theta'' notations~\cite{cormen2009introduction}.

\section{Poisson Channels (Assumption 1)}\label{sec:p1}
In this section, we consider the fundamental limits of covert packet insertion for the Poisson channel (Assumption 1). As evident from the proof, the possibility that Willie is located after Bob (Setting 2) is trivially addressed under Assumption 1. We will see this is not the case for the renewal channel model considered in Section~\ref{sec:r1}.
\begin{thm}
	\label{poissonins}
	In a Poisson channel with rate $\lambda$, Alice can covertly insert insert $\mathcal{O}(\sqrt{\lambda T})$ packets in a time interval of length $T$. Conversely, if Alice attempts to insert $\omega\left(\sqrt{\lambda T}\right)$ packets in a time interval of length $T$, there exists a detector that Willie can use to detect her with arbitrarily low sum of error probabilities $\mathbb{P}_{FA} + \mathbb{P}_{MD}$.
\end{thm}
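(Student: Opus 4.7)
For the achievability direction, the plan is to have Alice employ the simplest possible Poisson insertion strategy: she inserts packets according to an independent Poisson process of rate $\lambda_f$ and does not buffer at all. Invoking the superposition property already highlighted in the introduction, Willie in Setting 1 sees Poisson$(\lambda+\lambda_f)$ under $H_1$ and Poisson$(\lambda)$ under $H_0$; in Setting 2 Bob authenticates and removes Alice's packets and forwards Jack's untouched, so Willie sees Poisson$(\lambda)$ under both hypotheses and that location is trivially covert. For Setting 1 I would write the KL divergence on $[0,T]$ explicitly, $D(\mathbb{P}_1\|\mathbb{P}_0) = T\bigl[(\lambda+\lambda_f)\ln\frac{\lambda+\lambda_f}{\lambda} - \lambda_f\bigr] \sim \frac{\lambda_f^2 T}{2\lambda}$ for small $\lambda_f$, and then chain Pinsker's inequality with the standard lower bound $\mathbb{P}_{FA}+\mathbb{P}_{MD}\ge 1 - V(\mathbb{P}_0,\mathbb{P}_1)$, where $V$ denotes total variation. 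Choosing $\lambda_f = c_\epsilon\sqrt{\lambda/T}$ for a suitable $c_\epsilon$ then makes the sum of errors at least $1-\epsilon$, and the expected insertion count $\lambda_f T = c_\epsilon\sqrt{\lambda T}$ concentrates to $\mathcal{O}(\sqrt{\lambda T})$ with high probability via a Chernoff bound for the Poisson tail.

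For the converse, the plan is to exhibit a simple packet-counting threshold detector and to exploit both of Willie's possible locations simultaneously. Let $J_T\sim\text{Poisson}(\lambda T)$ be Jack's count on $[0,T]$, let $N_A$ be the number of Alice's insertions, and let $B^A_T$ and $B^B_T$ be the (nonnegative, causally determined) numbers of Jack's packets still held in Alice's and Bob's buffers at time $T$. Willie in Setting 1 observes $J_T - B^A_T + N_A$ packets, while Willie in Setting 2 observes $J_T - B^A_T - B^B_T$ packets; under $H_0$ each observer sees $J_T$, which is concentrated within $\mathcal{O}(\sqrt{\lambda T})$ of $\lambda T$. Setting a threshold at the midpoint between $\lambda T$ and the shifted mean and invoking Chebyshev/Poisson tail bounds shows that whenever the mean of the observed count under $H_1$ is shifted from $\lambda T$ by $\omega(\sqrt{\lambda T})$ in either setting, the corresponding count detector drives $\mathbb{P}_{FA}+\mathbb{P}_{MD}\to 0$. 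Contrapositively, for Alice to be covert in both settings one needs $\mathbb{E}[B^A_T+B^B_T]=\mathcal{O}(\sqrt{\lambda T})$ (so in particular $\mathbb{E}[B^A_T]=\mathcal{O}(\sqrt{\lambda T})$ since $B^B_T\ge 0$) and $|N_A-\mathbb{E}[B^A_T]|=\mathcal{O}(\sqrt{\lambda T})$, and these two bounds combine to force $N_A = \mathcal{O}(\sqrt{\lambda T})$, contradicting the assumption $N_A=\omega(\sqrt{\lambda T})$.

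The step I expect to be most delicate is promoting these mean-shift statements to the high-probability detection statements that the theorem actually demands. Both $B^A_T$ and $B^B_T$ are random functions of Jack's arrival times and of Alice's potentially adaptive strategy, so one has to check that the observed counts concentrate around their (shifted) means despite this dependence. What makes this tractable is that $J_T$ is Poisson with variance $\lambda T$ and is unaffected by any buffering or insertion decisions about which packets to release, and that $0\le B^A_T,B^B_T\le J_T$, so a simple Chebyshev or Poisson-tail bound on the relevant sums suffices; no martingale argument is required.
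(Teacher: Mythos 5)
Your achievability argument is essentially the paper's: Alice inserts via an independent Poisson process of rate $\Theta(\sqrt{\lambda/T})$, Setting 2 is trivially covert because Bob strips her packets and forwards Jack's stream unaltered, and Setting 1 covertness follows from a KL/Pinsker bound on the two Poisson count distributions. The only cosmetic differences are the direction of the divergence (you use $\mathcal{D}(\mathbb{P}_1\|\mathbb{P}_0)$, the paper uses $\mathcal{D}(\mathbb{P}_{N_0}\|\mathbb{P}_{N_1})$ so that $\ln(1+x)\geq x-x^2/2$ gives a clean non-asymptotic bound rather than a Taylor asymptotic), and the concentration tool for the number of inserted packets (you invoke a Poisson Chernoff bound; the paper rescales the exponential IPDs and uses the weak law of large numbers). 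These choices are interchangeable and neither changes the structure of the argument.

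Your converse, however, is a genuinely different and in one respect more careful route than the paper's. The paper's converse puts Willie in Setting 1, has him threshold the count $S$, and writes $S=N_a+N_j$ with $N_j\sim\mathrm{Poisson}(\lambda T)$; Appendix B then combines a Chebyshev bound on $\mathbb{P}_{FA}$ with a CLT tail bound to drive $\mathbb{P}_{MD}\to 0$. This implicitly assumes Alice does not buffer Jack's packets, so that the observed count is exactly Jack's count plus Alice's insertions. You instead write Willie's observation as $J_T-B^A_T+N_A$ in Setting 1 and $J_T-B^A_T-B^B_T$ in Setting 2, and argue that covertness at \emph{both} locations forces $B^A_T+B^B_T=\mathcal{O}(\sqrt{\lambda T})$ and $N_A-B^A_T=\mathcal{O}(\sqrt{\lambda T})$, hence $N_A=\mathcal{O}(\sqrt{\lambda T})$. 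This closes a loophole the paper leaves open: an Alice who deliberately withholds roughly as many of Jack's packets as she inserts could keep the Setting 1 count statistic centered, and only the Setting 2 observer would then see the deficit. The cost of your approach is that the chain ``mean shift $\Rightarrow$ detection'' now involves random, strategy-dependent quantities $B^A_T,B^B_T$, and you rightly flag that converting a statement about expectations into the required high-probability detection statement is the delicate step. To make that rigorous you would want to phrase the detector as a fixed two-sided threshold at $\lambda T\pm c\sqrt{\lambda T}$ (set from $H_0$ statistics alone, as the paper does, rather than ``at the midpoint to the shifted mean,'' which presupposes knowledge of Alice's strategy) and then argue that if $\mathbb{P}(N_A-B^A_T>c'\sqrt{\lambda T})$ or $\mathbb{P}(B^A_T+B^B_T>c'\sqrt{\lambda T})$ fails to vanish, the corresponding detector's $\mathbb{P}_{MD}$ is bounded away from $1$ while $\mathbb{P}_{FA}\to 0$, contradicting covertness. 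With that adjustment your converse is sound and strictly more general than the one in the paper.
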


\begin{proof} {\it (Achievability)} 
	
	\textit{\textbf{Construction}}: Alice generates a Poisson process with parameter $\Delta$ independent of the timings of Jack's packets and, at each point of the process, inserts her own packet into the channel between Jack and Steve. Bob collects and removes the packets inserted by Alice. 
	
	\textit{\textbf{Analysis}}: ({\em Covertness}) First, for each of the two possible locations of Willie, we show that communication is covert. Next, we calculate the number of covert packets transmitted by Alice.
	
	({\em Setting 1-Willie is between Alice and Bob, as show in Fig.~\ref{fig:SysMod2a}}): Willie observes the packets on the channel between Alice and Bob and decides whether Alice has inserted packets intended for Bob ($H_1$) or not ($H_0$). Note that $H_1$ and $H_1$ correspond to Poisson processes with rates $\lambda+\Delta$ and $\lambda$ respectively. By the Neyman-Pearson lemma~\cite[Ch. 3.2 and 13.1]{lehmann2005testing}, an optimal hypothesis test that minimizes the sum of error probabilities is the likelihood ratio (LRT) between the null hypothesis $H_0$ and the alternative hypothesis $H_1$ is given by~\cite[Ch. 3.5.2]{karr1991point}:
	\begin{align} 
	\label{eq:lrt}
	\Lambda(n) = \frac{\mathbb{P}_{N_1}(n)}{  \mathbb{P}_{N_0}(n)},
	\end{align}
	\noindent where $n$ is the number of packets that Willie observes in $[0,T]$, $\mathbb{P}_{N_0}(n) = \mathbb{P}(N_0=n)$ is the probability mass function (pmf) of the number of packets $N_0$ that Willie observes under the null hypothesis $H_0$ corresponding to a Poisson process with rate $\lambda$, and $\mathbb{P}_{N_1}(n)= \mathbb{P}(N_1=n)$ is the pmf for the number of packets $N_1$ that Willie observes under hypothesis $H_1$ corresponding to a Poisson process with rate $\lambda+\Delta$. 
	\noindent Suppose Alice sets 
	\begin{align} \label{eq:01}
	\Delta \leq \epsilon\sqrt{\frac{2\lambda}{T}}.
	\end{align}
	
	By~\eqref{eq:lrt}, we can see that the number of packets observed during the time interval of length $T$ is a sufficient statistic by which Willie can perform the optimal hypothesis test to decide whether Alice transmits or not. For any test on the number of packets during time $T$~\cite{bash_jsac2013},
	\begin{align} \label{eq:th10e0001} \mathbb{P}_{FA}+\mathbb{P}_{MD} \geq 1-     \sqrt{\frac{1}{2} \mathcal{D}(\mathbb{P}_{N_0} || \mathbb{P}_{N_1})} .
	\end{align}
	where $\mathcal{D}(\mathbb{P}_{N_0} || \mathbb{P}_{N_1})$ is the relative entropy between $\mathbb{P}_{N_0}$ and $\mathbb{P}_{N_1}$. 	Next, we show how Alice can lower bound the sum of average error probabilities by upper bounding $  \sqrt{\frac{1}{2} \mathcal{D}(\mathbb{P}_{N_0} || \mathbb{P}_{N_1})} $. For the given $ \mathbb{P}_{N_0}$ and $ \mathbb{P}_{N_1}$ the relative entropy is \cite{GrunerDKL}
	\begin{align}\nonumber
	\mathcal{D}(\mathbb{P}_{N_0} || \mathbb{P}_{N_1}) =  \Delta \cdot T - \lambda T \ln{\left( 1+\frac{\Delta}{\lambda}\right)}\leq \frac{\Delta^2}{2 \lambda}T \leq 2 \epsilon^2.
	\end{align}
	\noindent where the second to last step is true because $\ln(1+x) \geq x-\frac{x^2}{2}$ for  $x\geq 0$, and the last step is due to the definition of $\Delta$ given in~\eqref{eq:01}. Consequently, $\sqrt{\frac{1}{2} \mathcal{D}(\mathbb{P}_{N_0} || \mathbb{P}_{N_1})}  \leq \epsilon$, and thus$\mathbb{P}_{FA}+\mathbb{P}_{MD} \geq 1-\epsilon$ for $\Delta = \mathcal{O} \left(\sqrt{\lambda/ T}\right) $. 
	
	({\em Setting 2-Willie is between Bob and Steve, as show in Fig.~\ref{fig:SysMod2b}}): Willie observes the packets transmitted by Bob. Since Alice inserts her own packets independent of the channel, her insertion does not change the timing of Jack's packets. Since Bob removes Alice's inserted packets, Willie observes the original timings of the packets transmitted by Jack, and thus Alice and Bob's communication is covert. 
	
	({\em Number of Covert Packets}) Alice inserts packets according to a Poisson process with rate $\Delta$. Let $\tau(i)$ denote the time that Alice inserts the $i^{\mathrm{th}}$ packet, and $N_a$ denote the number of packets inserted by Alice. We focus on $\mathbb{P}\left(N_a \geq i\right)$. By~\eqref{def6e3}, 
	\begin{align}
	\nonumber \mathbb{P}\left(N_a \geq \epsilon \sqrt{\lambda T} \right) &= \mathbb{P}\left(\tau\left({\epsilon \sqrt{\lambda T}}\right) \leq T \right) = \mathbb{P}\left(\sum_{k=1}^{\epsilon \sqrt{\lambda T}} A_i \leq T \right),
	\end{align}
	\noindent where the $A_i$s are i.i.d. exponentially distributed IPDs with mean $\Delta^{-1}= \frac{1}{\epsilon} \sqrt{\frac{T}{2 \lambda }}$, which goes to infinity as $T\to \infty$. We introduce $A'_i = \sqrt{T} A_i$; $A'_1, A'_2, \ldots$ is a sequence of i.i.d. exponential random variables with finite mean $(\Delta \sqrt{T})^ {-1} = 1/{\epsilon \sqrt{2\lambda}}$ and variance ${1}/{2 \lambda \epsilon^2 }$. Consider 
	\begin{align}
\nonumber \lim_{T\to \infty}\mathbb{P}\left(N_a \geq \epsilon \sqrt{\lambda T} \right)=\lim_{T\to \infty} \mathbb{P}\left(\sum_{k=1}^{\epsilon \sqrt{\lambda T}} {A'_i} \leq \sqrt{T}\right) &=\lim_{T\to \infty} \mathbb{P}\left(\sum_{k=1}^{\epsilon \sqrt{\lambda T}} \frac{A'_i}{\epsilon \sqrt{\lambda T}} \leq \frac{\sqrt{T}}{\epsilon \sqrt{\lambda T}} \right),\\
		\label{eq:4} &= \lim_{T\to \infty} \mathbb{P}\left(\sum_{k=1}^{\epsilon \sqrt{\lambda T}} \frac{A'_i}{\epsilon \sqrt{\lambda T}} \leq \sqrt{2} \mathbb{E}[A'_i] \right)=1,
	\end{align}
	\noindent where the last step follows from the weak law of large numbers (WLLN) which yields $\sum_{k=1}^{\epsilon \sqrt{\lambda T}} \frac{A'_i}{\epsilon \sqrt{\lambda T}} \xrightarrow{P}\mathbb{E}[A'_i]$. By~\eqref{eq:4}, $\mathbb{P}\left(N_a \geq d\sqrt{\lambda T} \right) \to 1$, as $T \to \infty$, for all $d\leq \epsilon $. Consequently, Alice can insert $N_a = \mathcal{O} \left(\sqrt{\lambda T}\right)$ packets covertly.
	
	{\it (Converse)} To establish the converse, we provide an explicit detector for Willie that is sufficient to limit Alice's throughput across all potential transmission schemes (i.e., not necessarily insertion according to a Poisson process). Suppose that Willie observes a time interval of length $T$ and wishes to detect whether Alice transmits or not. Since he knows that the packet arrival process for the link between Jack and Steve is a Poisson process with parameter $\lambda$, he knows the expected number of packets in an interval $[0,T]$. Therefore, he counts the number of packets $S$ in this interval and performs a hypothesis test by setting a threshold $U$ and compares $S$ to $\lambda T+U$. If $S\leq\lambda T+U$, Willie decides $H_0$; otherwise, he decides $H_1$. Consider $\mathbb{P}_{FA}$,
	\begin{align}
\label{eq:PFAupperbound1}  \mathbb{P}_{FA} = \mathbb{P}\left(S>\lambda T + U \Big| H_0 \right) =\mathbb{P}\left(S-\lambda T > U \Big| H_0 \right)
	&\leq\mathbb{P}\left(|S-\lambda T| > U \Big| H_0 \right).
	\end{align}
	\noindent When $H_0$ is true, Willie observes a Poisson process with parameter $\lambda$; hence,
	\begin{align}
	\nonumber\mathbb{E}\left[S\Big|H_0\right]&=\lambda T,\\
	\nonumber\mathrm{Var}\left[S\Big|H_0\right]&={\lambda T}.
	\end{align}
	\noindent Therefore, applying Chebyshev's inequality on~\eqref{eq:PFAupperbound1} yields $\mathbb{P}_{FA} \leq \frac{\lambda T}{U^2 }$. Thus, $\forall 0<\alpha<1$, if Willie sets $U=\sqrt{\frac{\lambda T}{ \alpha}}$, he can achieve
	\begin{align}
	\nonumber \mathbb{P}_{FA} \leq \alpha.
	\end{align}
	Next, we will show that if Alice inserts $ \omega\left(\sqrt{\lambda T}\right)$ packets, she will be detected by Willie with high probability. Consider~$\mathbb{P}_{MD}$:
	\begin{align}
	\label{eq:1} \mathbb{P}_{MD}=\mathbb{P}\left(S\leq \lambda T + U \bigg| H_1\right) =\mathbb{P}\left(S\leq \lambda T + \sqrt{{\lambda T}/{ \alpha}} \bigg| H_1\right) &=\mathbb{P}\left(N_a+N_j\leq \lambda T + \sqrt{{\lambda T}/{ \alpha}} \right),\\
	\nonumber &=\mathbb{P}\left(N_j\leq \lambda T + \sqrt{{\lambda T}/{ \alpha}} -N_a\right),
	\end{align}
	\noindent where $N_a$ is the number of packets inserted by Alice and $N_j$ is the number of packets inserted by Jack. 
	We show in the Appendix~\ref{ap2t} that for all $\beta>0$,
	\begin{align}
	\label{eq:44}\lim\limits_{T \to \infty}\mathbb{P}_{MD}<\beta.
	\end{align}
	\noindent Since $\alpha$ and $\beta$ are arbitrary, $\mathbb{P}_{FA}+\mathbb{P}_{MD}$ is arbitrarily small whenever $N_a=\omega(\sqrt{\lambda T})$.
\end{proof}

\section{Renewal Channels (Assumption 2)} \label{sec:r1}
The packet arrival processes measured in many networks demonstrate non-Poisson behavior. Hence, in this section, we extend our results from Section III to the general renewal channel. Per Section II, we assume that the IPDs of Jack's transmitted stream are i.i.d. with pdf $p(x)$; thus, Jack's transmission rate is $\lambda=\left(\int_{0}^{\infty}x p(x)dx\right)^{-1}$. 

For Poisson channels, we took advantage of the fact that the superposition of two independent Poisson processes is a Poisson process. However, the superposition of two independent renewal processes is not necessarily a renewal process. Therefore, if Alice inserts her packets in the channel according to a renewal process, since the packet timings that Willie observes under $H_1$ ($\mathbb{P}_1$) is not a necessarily a renewal process, the derivation of $\mathbb{P}_1$ and the calculation of the relative entropy between $\mathbb{P}_1$ and  $\mathbb{P}_0$, which is required in the covertness analysis becomes challenging. Note that there is no special class of renewal processes (except Poisson processes) that makes the calculation easier; if the superposition of two ordinary renewal processes is an ordinary renewal process, then those processes are either Poisson~\cite{samuels1974characterization,ferreira2000pairs} or binomial-like processes~\cite{ferreira2000pairs}, which are not applicable to our scenarios. Therefore, we employ an alternative technique for Alice's insertion of packets. 

In~\cite{soltani2016allerton}, we employed the following technique: Alice and Bob employ  a two-phase scheme. In the first phase, Alice (slightly) slows down the packet stream to buffer packets. In the second phase, she generates a renewal process with a rate higher than Jack’s transmission rate. For each packet transmission during the second phase, Alice flips an unfair coin to decide whether to send one of her packets or one of Jack's packets. Although this technique is reasonable and its covertness analysis is accurate, the reliability analysis in~\cite{soltani2016allerton} relied on the approximation that a regular random walk can model Alice's buffer length in the second phase, which is not strictly true. Besides, it did not allow for the case where Willie is between Bob and Steve (Setting 2) in the covertness analysis. We can employ~\cite[Theorem 9.1]{iglehart1970multiple} which is also mentioned in~\cite[Theorem 4]{iglehart1972extreme} to relax the approximation in the reliability analysis. 

Here, we introduce another strategy that allows for accurate analysis. Alice and Bob employ a two-phase scheme. In the first phase, Bob transmits Jack's packets at a rate (slightly) smaller than Jack's packet rate $\lambda$ so as to build up a backlog of $N_b=\mathcal{O}(\sqrt{N})$ packets in his buffer. In this phase, Alice remains idle except for calculating $N_b$ by simulating Bob's buffering process. In the second phase, Alice replaces $N_b$ of Jack's packets with packets of her own and Bob replaces Alice's inserted packets with packets in his buffer. The second phase ends when the total number of (Alice's and Jack's) packets transmitted by Alice is $N$. 


In Lemma~\ref{lem:2}, we derive the number of packets that Bob can buffer when the total number of packets that Bob transmits is $N$. 
Consider $p^{-}(x,\rho_1)=(1-\rho_1) p\left(\left(1-\rho_1\right)x\right)$ which is the scaled version of $p(x)$, where $0<\rho_1<1$. Since $\int_{x=0}^{\infty}x p^{-}(x,\rho_1)dx = \frac{1}{1-\rho_1} \int_{x=0}^{\infty}x p(x) dx$, the renewal processes whose inter-arrival timings are governed by $p^{-}(x,\rho_1)$ has a smaller rate than that of $p(x)$. Lemma~\ref{lem:2} requires that $p^{-}(x,\rho_1)$ satisfies the following conditions ~\cite[Ch. 2.6]{kullback1968information} which are mentioned in~\cite[Theorem 1]{gurland1954regularity} as regularity conditions for maximum likelihood estimators with $f(x|\rho_1) = p^{-}(x,\rho_1)$:
\begin{align} 
\label{c1}\bullet  &\frac{\partial \log{p^{-}}}{ \partial \rho_1}, \frac{\partial^2 \log{p^{-}}}{\partial \rho_1^2},  \frac{\partial^3 \log{p^{-}}}{\partial \rho_1^3} \text{ exist, } \forall \rho_1\in (0,1)\\
\bullet\nonumber & \forall \rho_1 \in (0,1),  \bigg|\frac{\partial p^{-}}{\partial \rho_1}\bigg| < F(x), \text{ s.t. } \int_{x=0}^{\infty}F(x)dx<\infty, \\
&\nonumber \bigg|\frac{\partial^2 p^{-}}{\partial \rho_1^2}\bigg| < G(x),\text{ s.t. } \int_{x=0}^{\infty}G(x)dx<\infty \\
&\label{c2}  \bigg|\frac{\partial^3 \log p^{-}}{\partial \rho_1^3}\bigg| < H(x), \text{ s.t. } \int_{x=0}^{\infty}p(x)H(x)dx<\xi<\infty \text{ where }\xi \text{ is independent of }\rho_1\\
\label{c3}\bullet & \int_{x=0}^{\infty}\frac{\partial p^{-}(x,\rho_1)}{\partial \rho_1}\bigg|_{\rho_1=0}dx= \int_{x=0}^{\infty}\frac{\partial^2 p^{-}(x,\rho_1)}{\partial \rho_1^2}\bigg|_{\rho_1=0}dx=0
\end{align}
 Among the probability distributions that satisfy conditions (\ref{c1})-(\ref{c3}) are the generalized gamma distribution and its special cases: exponential distribution, Chi-squared distribution, Rayleigh distribution, Weibull distribution, Gamma distribution, and Erlang distribution. 
 
 We require that the support of $p(x)$ be $\mathbb{R}^{+}$ because 1) IPDs are positive; and 2) among the distributions with non-negative support, conditions (\ref{c1})-(\ref{c3}) do not satisfy for the distributions whose support is not $\mathbb{R}^{+}$, such as Pareto distribution, uniform distribution, and Beta distribution. Intuitively, the latter is required since Bob scales up the pdf of IPDs to $p^{-}(x,\rho)=(1-\rho_1)p(x (1-\rho_1))$ where $0<\rho_1<1$ is defined later. If the support of $p(x)$ is not $[0,\infty)$, then with high probability, the new pdf of the inter-packet delays $p^{-}(x,\rho)$ produces an IPD that does not fall in the support of $p(x)$. Hence, Willie will observe an inter-packet delay that cannot be generated from $p(x)$, and thus Willie detects Bob's buffering.
 

\begin{lem}\label{lem:2} Under the conditions given above for~\cite[Theorem 1]{gurland1954regularity} for the renewal process characterizing the packet timings on the link from Jack to Steve, Bob can covertly buffer $\mathcal{O}(\sqrt{N})$ packets while transmitting $N$ of Jack's packets, as long as $p^{-}(x,\rho_1)=(1-\rho_1) p\left(\left(1-\rho_1\right)x\right)$ satisfies conditions (\ref{c1})-(\ref{c3}). Conversely, if Bob buffers $\omega(\sqrt{N})$ packets while receiving $N$ of Jack's packets, there exists a detector that Willie can use to detect such a buffering with arbitrarily low sum of error probabilities $\mathbb{P}_{FA} + \mathbb{P}_{MD}$.
\end{lem}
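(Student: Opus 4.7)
My plan is to have Bob draw his output IPDs i.i.d.\ from the scaled density $p^{-}(x,\rho_1) = (1-\rho_1)p((1-\rho_1)x)$ with scaling $\rho_1 = \epsilon/\sqrt{N}$ for a small constant $\epsilon>0$ to be chosen. Since Jack's arrival rate is $\lambda$ while Bob releases at rate $(1-\rho_1)\lambda$, a backlog builds up linearly in time. To quantify it, let $T_N = \sum_{i=1}^N B_i$ with $B_i \sim p^{-}(\cdot,\rho_1)$ be the instant Bob emits his $N$th packet, and let $X_p(T_N)$ be the number of Jack arrivals during $[0,T_N]$, so $N_b = X_p(T_N) - N$. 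Applying a weak law of large numbers to $T_N$ (mean $N/(\lambda(1-\rho_1))$, standard deviation $\mathcal{O}(\sqrt{N})$) together with the elementary renewal theorem applied to $X_p(\cdot)$ yields $N_b = N\rho_1/(1-\rho_1) + \mathcal{O}(\sqrt{N}) = \Theta(\sqrt{N})$ with probability approaching $1$.

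In Setting 1, Alice is inactive and Willie sees Jack's unaltered stream, so covertness is immediate; the real work is in Setting 2, where Willie observes $N$ i.i.d.\ IPDs drawn from $p$ under $H_0$ and from $p^{-}(\cdot,\rho_1)$ under $H_1$. By additivity of relative entropy across independent samples, $\mathcal{D}(\mathbb{P}_0 || \mathbb{P}_1) = N\,\mathcal{D}(p || p^{-}(\cdot,\rho_1))$. The critical step is a second-order Taylor expansion of $\mathcal{D}(p || p^{-}(\cdot,\rho_1))$ about $\rho_1 = 0$: conditions (\ref{c1})--(\ref{c2}) legitimize differentiation under the integral and furnish an integrable dominating bound for the cubic remainder, while (\ref{c3}) forces the zeroth- and first-order terms to vanish. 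The outcome is $\mathcal{D}(p || p^{-}(\cdot,\rho_1)) = \tfrac{1}{2}\rho_1^2 I(0) + \mathcal{O}(\rho_1^3)$, where $I(0)$ is the Fisher information of $p^{-}$ at $\rho_1=0$. Substituting $\rho_1 = \epsilon/\sqrt{N}$ gives $\mathcal{D}(\mathbb{P}_0 || \mathbb{P}_1) \le \tfrac{1}{2}\epsilon^2 I(0) + o(1)$, and the same Pinsker-type bound~\eqref{eq:th10e0001} used in Theorem~\ref{poissonins} then yields $\mathbb{P}_{FA}+\mathbb{P}_{MD} \ge 1-\delta$ for any $\delta>0$, provided $\epsilon$ is small enough.

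For the converse, Willie measures the total elapsed time $T$ over which Bob releases his $N$ packets. Under $H_0$, Bob does not buffer, so $T$ is the sum of $N$ of Jack's IPDs with $\mathbb{E}[T|H_0]=N/\lambda$ and $\mathrm{Var}[T|H_0] = N\sigma_p^2$ for finite $\sigma_p^2$. Under $H_1$, since $N_b$ of Jack's packets have accumulated in Bob's buffer during $[0,T]$, the interval must accommodate $N+N_b$ Jack arrivals, so $\mathbb{E}[T|H_1] \ge (N+N_b)/\lambda$. Willie thresholds $T$ at $N/\lambda + U$ with $U = \sigma_p\sqrt{N/\alpha}$; Chebyshev yields $\mathbb{P}_{FA} \le \alpha$, and when $N_b = \omega(\sqrt{N})$ the mean shift $N_b/\lambda$ dominates $U = \Theta(\sqrt{N})$, so a second Chebyshev under $H_1$ drives $\mathbb{P}_{MD} \to 0$. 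The main obstacle I anticipate is the rigorous Taylor-expansion argument in the covertness step: one must carefully invoke the dominating functions from (\ref{c2}) to control the cubic remainder uniformly in $x$, and verify that (\ref{c3}) precisely eliminates the linear term, so that the full KL is well-approximated by $\tfrac{1}{2}N\rho_1^2 I(0)$. Once this expansion is locked in, both the buffer count and the converse follow by standard renewal-theoretic arguments.
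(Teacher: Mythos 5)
Your covertness and converse steps track the paper closely and are essentially sound, but the construction you propose for Bob's output stream is genuinely different from the paper's, and the difference breaks the buffer-count analysis.

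You have Bob draw fresh i.i.d.\ IPDs $B_i\sim p^{-}(\cdot,\rho_1)$, \emph{independent} of Jack's arrival process. The paper instead has Bob deterministically rescale Jack's arrival times: if the $i$-th packet arrives at $\tau_p(i)$, Bob sends it at $\tau_p(i)/(1-\rho_1)$. Both constructions make the marginal distribution of Bob's output IPDs equal to $p^{-}$, so the KL/Pinsker covertness argument goes through identically. But for the backlog they behave very differently. Under the paper's construction, $m = X_p\bigl(\tau_p(N)/(1-\rho_1)\bigr)-N$, and because $\tau_p(M)$ and $\tau_p(N)$ share their first $N$ summands, the event $\{m\ge M-N\}$ reduces (after cancellation) to a comparison involving only the \emph{extra} $M-N=\Theta(\rho_1 N)=\Theta(\sqrt{N})$ IPDs; the fluctuation of $m$ is therefore only $\Theta(N^{1/4})$, far smaller than the mean $\rho_1 N=\Theta(\sqrt{N})$, and $m$ concentrates (this is exactly what Appendices B and C establish). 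Under your independent construction, $N_b=X_p(T_N)-N$ with $T_N$ independent of the $A_i^{(J)}$, and both $T_N$ and $X_p(\cdot)$ contribute fluctuations of order $\sqrt{N}$ whose scale depends on $\lambda$ and $\sigma$ but \emph{not} on $\rho_1$. Since covertness forces $\epsilon$ small, the mean $N\rho_1/(1-\rho_1)\approx \epsilon\sqrt{N}$ is dominated by a $\Theta(\sqrt{N})$ zero-mean fluctuation, so your claim that $N_b=\Theta(\sqrt{N})$ with probability tending to $1$ does not hold; in fact $N_b$ is negative with probability bounded away from zero for small $\epsilon$. This also surfaces the feasibility problem your construction ignores: with independent $B_i$ there is positive probability that $\sum_{j\le i}B_j<\tau_p(i)$, i.e.\ Bob is asked to emit a packet he has not yet received. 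The paper's deterministic rescaling sidesteps this automatically because $\tau_p(i)/(1-\rho_1)>\tau_p(i)$.

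Two smaller points. First, your $\rho_1=\epsilon/\sqrt{N}$ omits the Fisher-information constant $c$ of \eqref{c4}; the paper uses $\rho_1=\epsilon/\sqrt{cN}$ precisely so that the KL target $\le 2\epsilon^2$ comes out clean, but this is cosmetic. Second, in the converse you should apply Chebyshev (or the CLT, as the paper does) to $\tau_p(N+N_b)$, which has known mean $(N+N_b)/\lambda$ and variance $(N+N_b)\sigma^2$ \emph{regardless of Bob's strategy}; a ``second Chebyshev under $H_1$'' applied directly to $T$ is not strategy-agnostic, since $\mathrm{Var}[T\,|\,H_1]$ depends on what Bob does. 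The paper's observation $SN\ge\tau_p(N+m)$ is exactly the device that makes the concentration argument independent of Bob's policy.
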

\begin{proof}
	{\it (Achievability)} 
	
	\textit{\textbf{Construction}}: Since Alice does not insert any packets and she only relays Jack's packets, Bob receives Jack's packet stream. In this lemma, the term ``packet'' will refer to Jack's packets. For a fixed number of packets $N$, Bob scales up the IPDs by $\frac{1}{1-\rho_1}$ where $0<\rho_1<1$, i.e, if he receives the $i^{\mathrm{th}}$ packet at $\tau_p(i)$, he sends it at time $\frac{\tau_p(i)}{1-\rho_1}$, as shown in Fig.~\ref{fig:Streched}.
	
	First, we show that Bob can buffer $\mathcal{O}(\sqrt{N})$ packets, then we demonstrate covertness, and finally in the converse case we show that Bob cannot buffer $\omega(\sqrt{N})$ packets covertly.
	
	\textbf{\textit{Analysis}}: ({\em Number of Buffered Packets}) Bob sets 
	\begin{align}
	\nonumber \rho_1&= \frac{\epsilon}{\sqrt{c N}},
	\end{align}
	\noindent where $0<\epsilon<1$ and $c>0$ is a constant defined later. Note that the first phase ends at time $T_0\coloneqq\frac{\tau_p\left(N\right)}{1-\rho_1}$ when Bob transmits the $N^{\mathrm{th}}$ packet.	From $t=0$ to $t=T_0$, if Bob receives a packet of jack at time $\tau$, he transmits it at time $\frac{\tau}{1-\rho_1}$. Let $X_{p}(t)$ be the total number of packets received from Jack within the interval $[0,t]$, and  $\tau_{p}(i)$ be the time of arrival of the $i^{\mathrm{th}}$ packet from Jack. The total number of packets that Bob receives from Jack and the total number of packets that Bob buffers are $ X_p\left(  \frac{\tau_p\left(N\right)}{1-\rho_1}\right)$ and 
	\begin{align}
\nonumber 	
m = X_p\left(  \frac{\tau_p\left(N\right)}{1-\rho_1}\right) - N,
	\end{align}
	respectively.
	\begin{figure}
		
		\begin{center}
			\includegraphics[width=\textwidth,height=\textheight,keepaspectratio]{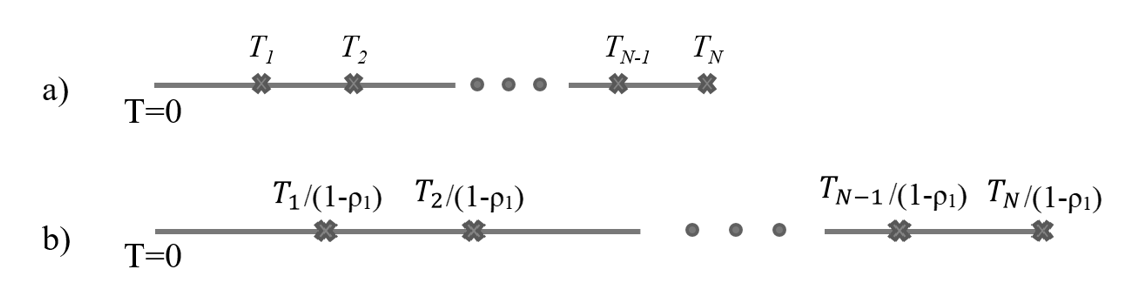}
		\end{center}
		\caption{a) Bob's received process b) The scaled version of Bob's received process when Bob uses a factor $\frac{1}{1-\rho_1}$.}
		\label{fig:Streched}
	\end{figure}
	\noindent We show in Appendices~\ref{ap1} and~\ref{ap2}, respectively, that
	\begin{align}
	\label{eq:5} \lim_{N \to \infty} \mathbb{P}\left( m \geq  \epsilon \sqrt{\frac{N}{ 4c}} \right) &= 1,\\
	\label{eq:51} \lim_{N \to \infty} \mathbb{P}\left( m \leq  \epsilon \sqrt{\frac{4 N}{ c}} \right) &= 1.
	\end{align}
	\noindent By~\eqref{eq:5},  for all $d\leq \epsilon \sqrt{\frac{1}{4c}}$, $ \mathbb{P}\left( m \geq  d \sqrt{N} \right) \to 1$, as $N \to \infty$. Therefore, Bob buffers $\mathcal{O}(\sqrt{N})$ packets when he transmits $N$ of Jack's packets. 
	
	({\em Covertness}) Now, we show that Bob's buffering is covert. If Willie is between Alice and Bob (Setting 1), he will not observe any changes in packet timings due to Bob's buffering, and thus the covertness follows immediately. Therefore, we present the analysis for the case where Willie is between Bob and Steve (Setting 2). We assume Willie knows the number of packets $N$ being slowed down and the scaling factor $1-\rho_1$ that Bob has possibly used. Upon observing the first $N$ packets, Willie decides whether Bob has not modified the packet timings ($H_0$), or he has slowed down those $N$ packets $(H_1)$. If Willie applies an optimal hypothesis test that minimizes $\mathbb{P}_{FA}+\mathbb{P}_{MD}$ on the IPDs, then arguments similar to those leading to~\eqref{eq:th10e0001} yield:
	\begin{align} \label{eq:th10e00011} \mathbb{P}_{FA}+\mathbb{P}_{MD} \geq 1-     \sqrt{\frac{1}{2}  \mathcal{D}(\mathbb{P}_0 || \mathbb{P}_1)}, \; 
	\end{align}
	\noindent where:
	\begin{align}
	\nonumber  \mathbb{P}_0&=\prod_{i=1}^{N} p(x_i),\\
	\nonumber \mathbb{P}_1&=\prod_{i=1}^{N} p^{-}(x_i,\rho_1).
	\end{align}
	\noindent Therefore, 
	\begin{align}
	\label{lem2:1}  \mathcal{D}(\mathbb{P}_0 || \mathbb{P}_1) = N   \mathcal{D}\left(p(x) || p^-(x,\rho_1)\right).
	\end{align}
	\noindent Since the regulatory conditions (\ref{c1}-\ref{c3}) hold,~\cite[Ch. 2.6]{kullback1968information} yields:
	\begin{align}
	\label{dklkhaf}\mathcal{D}\left({p}(x) || {p}^-(x,\rho_1)\right) =  \frac{c\rho_1^2}{2} +  \mathcal{O}\left(\rho_1^3\right) \text{ as } \rho_1 \to 0,
	\end{align}
	\noindent where $c$ is a positive constant derived in Appendix~\ref{apc4},
	\begin{align}
	\label{c4}  c=-1+ \int_{x=0}^{\infty}  p(x) {x^2} \left(\frac{d \log p(x)}{dx}\right)^2 dx.
	\end{align}
	\noindent Note that $c$ depends on $p(x)$. By~\eqref{lem2:1} and~\eqref{dklkhaf},
	\begin{align}
	\nonumber \mathcal{D}(\mathbb{P}_0 || \mathbb{P}_1) &= N \left(\frac{c \rho_1^2}{2} +  \mathcal{O}\left(\rho_1^3\right) \right) \text{ as } \rho_1 \to 0.
	\end{align}
	\noindent Because $\rho_1= \frac{\epsilon}{\sqrt{c N}}$, $\lim_{N\to\infty}\sqrt{\frac{1}{2} \mathcal{D}(\mathbb{P}_0 || \mathbb{P}_1)} =\lim_{N\to\infty} \epsilon \sqrt{\frac{N}{4 N}} < \epsilon$. Thus, by \eqref{eq:th10e00011}, $\mathbb{P}_{FA}+\mathbb{P}_{MD} {\geq}\, 1-\epsilon $ as $N\to \infty$ and Bob covertly buffers $\mathcal{O}(\sqrt{N})$ packets when he transmits $N$ of Jack's packets.

	{\it (Converse)} Since Willie knows $p(x)$, he knows the expected sum of the IPDs of $N$ packets. Therefore, he calculates the average observed IPD $S$ and performs a hypothesis test by setting a threshold $U$ and comparing $S$ with $\lambda^{-1} +U$. If $S\leq \lambda^{-1}+U$, he decides $H_0$; otherwise, he decides $H_1$. Observe
	\begin{align}
	\label{eq:PFAupperbound12} \mathbb{P}_{FA} &= \mathbb{P}\left(S>\lambda^{-1} + U \bigg| H_0 \right)=\mathbb{P}\left(S -\lambda^{-1} > U \bigg| H_0 \right) \leq\mathbb{P}\left(|S- \lambda^{-1}| > U \bigg| H_0 \right).
	\end{align}
	\noindent When $H_0$ is true, Willie observes a renewal process with rate $\lambda$, with variance $\sigma^2$; hence,
	\begin{align}
	\nonumber \mathbb{E}\left[S\Big|H_0\right]&= \lambda^{-1},\\
	\nonumber\mathrm{Var}\left[S\Big|H_0\right]&=\frac{ \sigma^2}{N}.
	\end{align}
	\noindent Therefore, applying Chebyshev's inequality on~\eqref{eq:PFAupperbound12} yields $\mathbb{P}_{FA} \leq \frac{\sigma^2 }{N U^2 }$. Therefore, if Willie sets $U=\sqrt{\frac{ \sigma^2}{  \alpha N}}$, for any $0<\alpha<1$, he achieves $\mathbb{P}_{FA}\leq \alpha$.
	
	Next, we will show that if Bob buffers $m=\omega(\sqrt{N})$ packets, he will be detected by Willie with high probability. 
	
	When Bob buffers packets, he will transmit $N$ packets during the time that Jack transmits $N+m$ packets. Therefore,
	$\tau_p(N+m) \leq S N < \tau_p(N+m+1)$.  Now, let us consider~$\mathbb{P}_{MD}$. When $H_1$ is true, $S N  \geq \tau_p(N+m)$. Thus:
	\begin{align}
	\nonumber \mathbb{P}_{MD}=\mathbb{P}\left(S\leq  \lambda^{-1}  + U \bigg| H_1\right) &=\mathbb{P}\left(S\leq  \lambda^{-1}  + U \bigg| {S }  \geq \frac{ \tau_p(N+m)}{N}\right),\\
	\nonumber &=\mathbb{P}\left( \frac{\tau_p(N+m)}{N}\leq S\leq  \lambda^{-1}  + U \bigg| {S }  \geq \frac{ \tau_p(N+m)}{N}\right),\\
	\nonumber &\leq\mathbb{P}\left(\frac{\tau_p(N+m)}{N} \leq  \lambda^{-1}  + U  \bigg| {S }  \geq \frac{ \tau_p(N+m)}{N}\right),\\
	\nonumber &=\mathbb{P}\left(\frac{\tau_p(N+m)}{N}\leq  \lambda^{-1}  + U \right),\\
	\nonumber &=\mathbb{P}\left( \frac{\tau_p(N+m)}{N+m}\leq \lambda^{-1} - \lambda^{-1} \frac{m+1}{N+m} + \frac{NU}{N+m} \right).
	\end{align}
	\noindent Note that $\tau_p(N+m)$ is the sum of $N+m$ i.i.d. random variables with mean $\lambda^{-1}$ and variance $\sigma^2$. Therefore, the central limit theorem (CLT) yields $\sqrt{N+m}\left(\frac{\tau_p(N+m)}{N+m}-\lambda^{-1}\right) \xrightarrow{D} Y$, where $Y\sim \mathcal{N}(0,\sigma^2)$ is a Gaussian random variable with mean zero and variance $\sigma^2$. Therefore, as $N \to \infty$, 
	\begin{align}
	\label{th2c3}
	 \mathbb{P}_{MD} \leq  \mathbb{P}\left( Y\leq - \lambda^{-1} \frac{m+1}{\sqrt{N+m}} + \frac{N U}{\sqrt{N+m}} \right)= \mathbb{P}\left( Y \leq  - \lambda^{-1} \frac{m+1}{\sqrt{N+m}} + {\sqrt{\frac{N}{N+m}}}\sqrt{\frac{ \sigma^2}{  \alpha }} \right),
	\end{align}
	\noindent where~\eqref{th2c3} is true since $U=\sqrt{\frac{ \sigma^2}{  \alpha N}}$. Thus, if $m=\omega(\sqrt{N})$ then $\lim\limits_{N \to \infty}\mathbb{P}_{MD}=0$. Combined with the results for the probability of false alarm above, if Bob collects $m=\omega\left({\sqrt{N}}\right)$ packets, Willie can choose a $U=\sqrt{\frac{\sigma^2}{N \alpha}}$ to achieve any (small) $\alpha>0$ and $\beta>0$ desired.
\end{proof}

Next, we leverage the results of Lemma~\ref{lem:2} to present and prove the results for packet insertion on a renewal channel. Although Alice and Bob do not know the actual location of Willie, their strategy guarantees covertness irrespective of Willie's location. 
Then, we conclude that if he analyzes the whole stream of packets transmitted by Alice, the communication is covert.

\begin{thm} \label{thWillieABS} In a renewal channel whose IPDs have $p(x)$, with conditions (\ref{c1}-\ref{c3}) true, Alice can covertly insert $\mathcal{O}(\sqrt{N})$ packets in a packet stream of length $N$. Conversely, if Alice attempts to insert $\omega(\sqrt{N})$ packets in a packet stream of length $N$, there exists a detector that Willie can use to detect her with arbitrarily low sum of error probabilities $\mathbb{P}_{FA} + \mathbb{P}_{MD}$.
\end{thm}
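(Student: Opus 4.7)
The plan is to assemble the theorem by plugging the two-phase construction described immediately before the statement into the machinery of Lemma~\ref{lem:2}. For achievability I adopt that construction verbatim: in Phase~1, Bob runs the buffering protocol of Lemma~\ref{lem:2} with scaling parameter $\rho_1=\epsilon/\sqrt{cN}$ to accumulate $N_b=\mathcal{O}(\sqrt{N})$ of Jack's packets, while Alice simulates Bob's buffer in order to learn $N_b$; in Phase~2, Alice replaces $N_b$ of Jack's packets with her own at Jack's arrival instants, and Bob simultaneously removes each of Alice's packets and re-emits one from his buffer in its place. The number of covert Alice-originated packets is then $N_a=N_b=\mathcal{O}(\sqrt{N})$, inherited directly from Lemma~\ref{lem:2}.

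Covertness I argue separately for the two possible Willie locations. In Setting~1, Willie observes Alice's output; because her substitution preserves Jack's arrival times exactly and Willie cannot authenticate individual packets, the distribution of the IPDs under $H_1$ coincides with that under $H_0$, yielding $\mathbb{P}_{FA}+\mathbb{P}_{MD}=1$ trivially. In Setting~2, Willie observes Bob's output, which is precisely the scaled renewal stream whose covertness is established in Lemma~\ref{lem:2}, so $\mathbb{P}_{FA}+\mathbb{P}_{MD}\geq 1-\epsilon$ transfers immediately from that lemma.

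For the converse I mirror the detector used in Lemma~\ref{lem:2}'s converse. Since Alice and Bob do not know Willie's location, it suffices to exhibit a detector in a single setting; I use Setting~1. Willie records the sample-mean IPD $S$ of the first $N$ packets he observes and compares $S$ with $\lambda^{-1}+U$, taking $U=\sqrt{\sigma^2/(\alpha N)}$. Chebyshev's inequality then yields $\mathbb{P}_{FA}\leq\alpha$ under $H_0$. If Alice inserts $N_a=\omega(\sqrt{N})$ extra packets, those additional arrivals shorten the sample-mean IPD by $\Omega(1/\sqrt{N})$, which asymptotically exceeds $U$; a CLT argument on $\tau_p(N+N_a)/(N+N_a)$ along the lines of~\eqref{th2c3} then drives $\mathbb{P}_{MD}\to 0$, so the two error probabilities are simultaneously arbitrarily small.

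The hard part is the covertness analysis in Setting~2. Although morally it is a corollary of Lemma~\ref{lem:2}, one must verify that the stream Bob produces in Phase~2 (a mixture of pass-through Jack packets with buffered releases substituted for Alice's insertions) has the same joint IPD distribution as the single-phase scaled renewal stream analyzed in Lemma~\ref{lem:2}, so that the KL bound $\mathcal{D}(\mathbb{P}_0 || \mathbb{P}_1)=\mathcal{O}(N\rho_1^2)$ transfers without modification. Any distributional artifact at the Phase~1/Phase~2 boundary, or any mismatch between the inter-packet delays of buffered releases and those of pass-through packets, would force a separate divergence bound for the mixture stream; confirming that the two phases concatenate into a single clean scaled renewal process is where the bulk of the technical care lies.
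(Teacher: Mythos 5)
Your overall architecture matches the paper exactly: use Lemma~\ref{lem:2} for Phase~1 buffering, a substitution scheme in Phase~2, trivial covertness in Setting~1, and a sample-mean detector with Chebyshev + CLT for the converse. The converse as you sketch it (sample-mean IPD compared against $\lambda^{-1}+U$ with $U=\sqrt{\sigma^2/(\alpha N)}$, then CLT on the sum $\tau_p(\cdot)$) is essentially the paper's argument with a minor difference in how many packets Willie observes.

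The gap is in your Setting~2 covertness analysis, and it is more than a ``verify carefully'' issue: your description of what needs to be verified is actually the wrong target. You write that Willie's observed stream ``is precisely the scaled renewal stream whose covertness is established in Lemma~\ref{lem:2}'' and later that the work lies in ``confirming that the two phases concatenate into a single clean scaled renewal process.'' Neither holds, nor does the paper try to make it hold. Only Phase~1 is scaled. In Phase~2, Bob simply forwards packets at Jack's arrival epochs plus a fixed offset $\phi$, which is \emph{not} a scaled renewal stream and is not what Lemma~\ref{lem:2} controls. The paper's key device, which your proposal misses, is the choice of the Phase~2 delay $\phi = \frac{\tau_p(\lfloor N\psi\rfloor)}{1-\rho_2} - \tau_p(\lfloor\psi N\rfloor+m)$: with this particular constant shift, the first IPD of Phase~2, $\phi+\theta$, equals exactly Jack's $\{\lfloor N\psi\rfloor+m+1\}^{\mathrm{th}}$ original IPD, and every subsequent Phase~2 IPD is also an unmodified Jack IPD. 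Hence $\mathcal{D}(\mathbb{P}_0^{(2)}\|\mathbb{P}_1^{(2)}) \to 0$, and by the product structure $\mathbb{P}_j = \mathbb{P}_j^{(1)}\mathbb{P}_j^{(2)}$, the total divergence is the sum of the per-phase divergences, bounded by the Phase~1 term alone (with $\rho_2$ re-tuned to $\epsilon\sqrt{1/(cN\psi)}$ because Phase~1 runs over $\lfloor\psi N\rfloor$ packets, not $N$). Without the $\phi$-offset trick, the Phase~1/Phase~2 boundary IPD is \emph{not} distributed as $p(x)$ and you cannot simply invoke Lemma~\ref{lem:2} for the concatenated stream; so the step you flagged as ``morally a corollary'' is where the actual novel construction lives, and your proposal does not supply it.
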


\begin{proof} 	
	{\it (Achievability)} 
	
	\textit{\textbf{Construction}}: Alice and Bob employ a two-phase scheme. During the buffering phase, Alice is idle but Bob slows down Jack's packets to build up packets in his buffer, i.e., if he receives a packet at time $\tau$, he transmits it at time $\frac{\tau}{1-\rho_2}$ where 
	\begin{align}
	\label{rho2Val} \rho_2= \epsilon \sqrt{\frac{1}{c N \psi}},
	\end{align}
	and $\psi$ is any constant that satisfies $0<\psi<1$. The first phase ends when Bob transmits the $\lfloor \psi N \rfloor^{\mathrm{th}} $ packet of Jack. From Lemma~\ref{lem:2}, Bob can buffer $N_b=\mathcal{O}\left(\sqrt{N }\right)$ packets covertly. Alice knows Bob's buffering process because she knows the timings of packets transmitted by Jack and $\rho_2$; thus, she calculates the number $N_b$ of packets buffered by Bob. In the second phase, Alice replaces $N_b$ of Jack's packets with packets of her own, and Bob replaces Alice's packets by Jack's packets in his buffer. Alice and Bob do this without changing the order of Jack's packets. Furthermore, Bob delays each packet in the second phase for $\phi$ seconds, where $\phi$ is the time elapsed between the moment that Bob receives the last packet in the first phase until the end of the first phase. We will later explain how this delay makes the pdf of Bob's first IPD in the second phase equal to $p(x)$.
	
	Since Willie cannot verify the source of the packets, Alice can choose any subset of size $N_b$ of packets transmitted by Jack in the second phase to replace them with her own packets. 
	Here, we propose a scheme where the locations of Alice packets are random. To decide whether to replace a packet, she uses a Bernoulli decision, i.e., each time she receives a packet from Jack, first she generates a random variable according to a Bernoulli distribution with $\mathbb{P}\left(\text{Success}\right)=\rho_3 = \frac{2 N_b}{N (1-\psi)}$. If she observes ``Success'', she replaces the packet; otherwise, she does not. She stops when she replaces the $N_b^{\mathrm{th}}$ packet. The second phase ends when the total number of (Alice's and Jack's) packets transmitted by Alice is $N$. At the end of the second phase, Alice will have $N_b$ of Jack's packets in her buffer. 
After the transmission, Alice and Bob will relay Jack's packets. Alice transmits Jack's oldest packet in her buffer and stores the newly received pack to keep the packets transmitted by Jack in order, and Bob, whose buffer is empty, forwards Jack's packets. 
	
		\textit{\textbf{Analysis}}: ({\em Covertness}) First, for each of the two possible locations of Willie, we show that communication is covert. Next, we calculate the number of covert packets transmitted by Alice.
	
	({\em Setting 1-Willie is between Alice and Bob, as show in Fig.~\ref{fig:SysMod2a} }):  Since Alice does not change packet timings and Willie is between Alice and Bob, Willie observes the original packet timings transmitted by Jack and covertness follows immediately.	
	
	({\em Setting 2-Willie is between Bob and Steve, as show in Fig.~\ref{fig:SysMod2b}}): Recall that Willie knows Alice and Bob's transmission scheme and parameters, the time they start and end each phase, and the scaling factor $1-\rho_2$ that Bob has used. We first assume Willie analyzes the packets in the two phases separately and show that the communication is covert. Then, we conclude that if he analyzes the whole stream of packets transmitted by Bob together, the communication is covert. 
	
	In the first phase, Bob slows down packets from Jack to buffer packets until he transmits packet $\lfloor \psi N \rfloor$ of Jack. By Lemma~\ref{lem:2}, Bob buffers $N_b=\mathcal{O}(\sqrt{N})$ packets, while for all $\epsilon$,
	\begin{align} 
	\nonumber \lim\limits_{N \to \infty}\mathcal{D}\left(\mathbb{P}_0^{(1)} || \mathbb{P}_1^{(1)}\right) &\leq 2 \epsilon^2,\\
	\nonumber \mathbb{P}_{FA}^{(1)}+\mathbb{P}_{MD}^{(1)} &\geq 1-\epsilon.
	\end{align}
	\noindent where $\mathbb{P}_0^{(1)}$ and $\mathbb{P}_1^{(1)}$ are joint pdfs of the IPDs in the first phase, when $H_0$ and $H_1$ are true respectively, and $\mathbb{P}_{FA}^{(1)}$ and $\mathbb{P}_{MD}^{(1)}$ are the probability of rejecting $H_0$ when it is true and the probability of rejecting $H_1$ when it is true, respectively in the first phase. Thus, Bob's buffering is covert. 
	\begin{figure}
	
		\begin{center}
			\includegraphics[width=16cm,height=10cm,keepaspectratio]{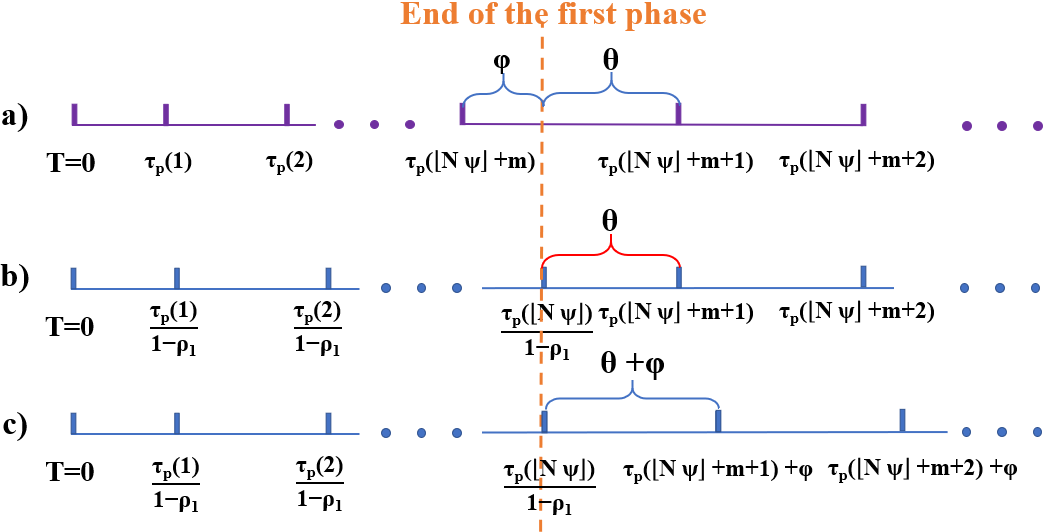}
		\end{center}
		\caption{a) Bob's packet arrival times b) Bob's packet departure times without delaying packets $\phi$ seconds in the second phase c) Bob's packet departure times with delaying packets $\phi$ seconds in the second phase. 
		}
		\label{fig:IPDs}
	\end{figure}

Next, we show that Willie observes Jack's original IPDs in the second phase. When the first phase ends Bob has buffered $m$ packets, transmitted $\lfloor N \psi \rfloor$ packets, and received $\lfloor N \psi \rfloor+m$ packets from Jack. Recall that in the second phase Bob delays each packet $\phi$ seconds, where $\phi$ is the time elapsed between the moment that Bob receives the last packet in the first phase ($t=\tau_p(\lfloor \psi N \rfloor + m)$) until the end of the first phase ($t=\frac{\tau_p{\lfloor N \psi \rfloor}}{1-\rho_1}$), i.e., $\phi=\frac{\tau_p{\lfloor N \psi \rfloor}}{1-\rho_1} -\tau_p(\lfloor \psi N \rfloor + m) $. Denote by $\theta$ 
the time elapsed between end of the first phase ($t=\frac{\tau_p{\lfloor N \psi \rfloor}}{1-\rho_1}$) and the moment that Bob receives the first packet in the second phase ($t=\tau_p(\lfloor \psi N \rfloor + m+1)$), i.e., 
$$\theta=\tau_p( \lfloor N \psi \rfloor+m + 1) - \frac{ \tau_p(\lfloor N \psi\rfloor )}{1-\rho_1}.$$ 
Since Bob delays the packets in the second phase $\phi$ seconds, Bob's first IPD in the second phase will be 
$$\phi+\theta = \tau_p( \lfloor N \psi \rfloor+m + 1) - \tau_p( \lfloor N \psi \rfloor+m),$$ 
which is Jack's original $\{\lfloor N \psi \rfloor+m + 1\}^{\mathrm{th}}$ IPD, and thus has the pdf $p(x)$ (see Fig~\ref{fig:IPDs}). Since all other IPDs in the second phase are also Jack's original IPDs, Willie observes the original IPDs transmitted by Jack and thus the covertness follows immediately for the second phase. Denote by $\mathbb{P}_0^{(2)}$ and $\mathbb{P}_1^{(2)}$ the joint pdfs of the IPDs in the second phase, when $H_0$ and $H_1$ are true, respectively, and by $\mathbb{P}_{FA}^{(2)}$ and $\mathbb{P}_{MD}^{(2)}$ the probability of rejecting $H_0$ when it is true and the probability of rejecting $H_1$ when it is true, respectively, in the second phase. Thus, 
	\begin{align}
	\nonumber\lim\limits_{N \to \infty}  {{\mathcal{D}(\mathbb{P}_0^{(2)} || \mathbb{P}_1^{(2)})}} &= 0,\\
	\mathbb{P}_{FA}^{(2)}+\mathbb{P}_{MD}^{(2)} &> 1-\epsilon.
	\end{align}
	\noindent Combined with the results of covertness for the first phase, if Willie analyzes the two sequences of packets in the first and second phase separately, the communication is covert, i.e., his sum of error probabilities in each phase is upper bounded by $1-\epsilon$ for all $\epsilon$. 
	
	Now assume that Willie analyzes the entire sequence of packets from the first and second phase together. Since $\mathbb{P}_0 = \mathbb{P}_0^{(1)} \mathbb{P}_0^{(2)}$ and $\mathbb{P}_1 = \mathbb{P}_1^{(1)} \mathbb{P}_1^{(2)}$, 
	\begin{align}
	\nonumber \mathcal{D}\left(\mathbb{P}_0||\mathbb{P}_1\right) = \mathcal{D}\left(\mathbb{P}_0^{(1)}||\mathbb{P}_1^{(1)}\right) + \mathcal{D}\left(\mathbb{P}_0^{(2)}||\mathbb{P}_1^{(2)}\right) \leq  \epsilon'^2.
	\end{align}
	\noindent Consequently, Alice can achieve $ \mathbb{P}_{FA}+\mathbb{P}_{MD} \geq 1-\epsilon'$ for all $\epsilon'$.
		
	({\em Number of Packets}) Recall that the first phase ends when Bob transmits the $\lfloor \psi N\rfloor^{\mathrm{th}}$ packet of Jack. Thus, replacing $N$ with $\lfloor \psi N \rfloor$ in~\eqref{eq:5} and ~\eqref{eq:51} yields $\mathbb{P}\left( N_b \geq  \epsilon \sqrt{\frac{\lfloor  \psi N\rfloor}{ 4c}} \right) \to 1$ and $\mathbb{P}\left( N_b \leq  \epsilon \sqrt{\frac{4 \lfloor \psi N \rfloor}{ c}} \right) \to 1$, respectively, as $N \to \infty$. Recall that $c>0$ is given in~\eqref{c4}. Since $ \mathbb{P}\left( N_b \geq  d \sqrt{\frac{4 \lfloor \psi N \rfloor}{ c}} \right) \to 1$ for all $d\leq  \epsilon$, Alice can insert $\mathcal{O}(\sqrt{N})$ packets in a packet stream of length $N$. 
	
	{\it (Converse)} The argument follows analogously to that of the converse in Lemma~\ref{lem:2}. Suppose that Willie observes $R=N+N_a$ packets and wishes to detect whether Alice has done nothing over the channel ($H_0$) or she has inserted $N_a$ packets. He calculates average observed IPD $S$ and sets a threshold $U$; if $S\geq \lambda^{-1}+U$, he decides $H_0$; otherwise, he decides $H_1$. We can show that if Willie sets $U=\sqrt{\frac{ \sigma^2}{ \alpha R}}$, for any $0<\alpha<1$, he achieves $\mathbb{P}_{FA}\leq \alpha$. Willie knows that if Alice chooses to insert packets, she will use the time of transmission of $N$ packets from Jack to do so. Therefore, if $H_1$ is true, then $S R  \leq \tau_p(N)$. Using this we can show that if Alice inserts $N_a=\omega(\sqrt{N})$ packets, then $\lim\limits_{N \to \infty}\mathbb{P}_{MD}=0$. Thus, if Alice inserts $N_a=\omega({\sqrt{N}})$ packets, Willie can choose a $U=\sqrt{\frac{\sigma^2}{R \alpha}}$ to achieve any (small) $\alpha>0$ and $\beta>0$ desired.
\end{proof}
\section{Discussion} \label{sec:dis1}

\subsection{Alice's insertion without the buffering phase}
In Theorem~\ref{thWillieABS}, Alice and Bob use a two-phase scheme. However, we could consider a simpler (one-phase) scheme. Alice generates a process with a (slightly) higher rate by generating packet transmission events for the following pdf $p^+(x,\rho_4)=\frac{p\left({x}/\left(1-\rho_4\right)\right)}{1-\rho_4}$, where 
\begin{align}
\label{rho4Val} \rho_4=  \frac{c'}{\sqrt{N}},
\end{align}
\noindent and $c'=\frac{\epsilon}{\sqrt{2c}}$. Note that $\rho_4<1$ for large enough $N$. Then:
\begin{enumerate}
	\item She buffers every packet she receives.
	\item Every time she generates a packet transmission event, she transmits one of Jack's packet from her buffer if one is there; if not, she sends a packet of her own.
\end{enumerate}
Although this scheme does not yield an infinite delay for packets unlike Alice and Bob's two-phase scheme in Theorem~\ref{thWillieABS} (see ~\eqref{eq:ap11}), it does not enable Alice to covertly insert $\mathcal{O}(\sqrt{N})$ packets; in fact, Alice cannot insert $f(N)=\omega(1)$ packets. 
\begin{thm} \label{thsimp} Consider the above scheme. There is no function $f(N)=\omega(1)$ such that $\lim\limits_{N \to \infty}\mathbb{P}\left(N_a \geq f(N) \right) =1$, where $N_a$ is the number of packets that Alice can insert packets intended for Bob in a packet stream of length $N$.
\end{thm}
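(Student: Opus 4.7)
The plan is to recast the number of insertions $N_a$ as the running maximum of a random walk and then argue that this maximum is too tight (after proper normalization) to force $\mathbb{P}(N_a \geq f(N)) \to 1$ for every $f(N)=\omega(1)$.

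First, I would set up the random walk. Let $T_j$ be Alice's $j$-th scheduled transmission time under $p^+(\cdot,\rho_4)$, let $J_j$ be the number of Jack's packets received by time $T_j$, and define $W_j = j - J_j$ with $W_0 = 0$. By the scheme's rule, the buffer is empty just before Alice's $j$-th attempt iff $W_j > A_{j-1}$, where $A_{j-1}$ denotes the cumulative insertions so far. A short induction on $j$ then yields the closed form
\begin{equation*}
N_a = A_N = \max_{0 \leq j \leq N} W_j.
\end{equation*}

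Second, I would analyze this maximum. The increments $W_{j+1} - W_j = 1 - (J_{j+1} - J_j)$ are i.i.d.\ with mean $\rho_4 = c'/\sqrt{N}$ (using that $\mathbb{E}[Y_j] = (1-\rho_4)/\lambda$) and a variance that stays bounded as $N\to\infty$, by a Wald/renewal identity applied to $p^+(\cdot,\rho_4)$. Because the per-step drift scales like $1/\sqrt{N}$ while the per-step standard deviation is $\Theta(1)$, the walk lies in a diffusive (not ballistic) regime. A Donsker-type functional CLT for the triangular array $\{W_{\lfloor Nt \rfloor}/\sqrt{N}\}$ then gives weak convergence on $D[0,1]$ to $c' t + \sigma B(t)$ for some $\sigma>0$.

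Third, the continuous-mapping theorem applied to the supremum functional yields $N_a/\sqrt{N} \Rightarrow M := \sup_{t \in [0,1]}(c' t + \sigma B(t))$, which is an a.s.\ finite random variable with a continuous distribution function. Hence $\{N_a/\sqrt{N}\}_N$ is tight, and by comparing $f(N)/\sqrt{N}$ to continuity points of $M$'s distribution one can control $\mathbb{P}(N_a \geq f(N))$ uniformly in $N$.

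The main obstacle, as I see it, is the last step: passing from tightness of $N_a/\sqrt{N}$ to ruling out $\mathbb{P}(N_a \geq f(N))\to 1$ for \emph{every} $f(N)=\omega(1)$, rather than only for those of order at least $\sqrt{N}$. To close this gap I would combine the tightness argument above with a direct lower bound $\liminf_N \mathbb{P}(N_a \leq k) > 0$ for some fixed $k$, obtained by exploiting the regenerative structure of the buffer occupancy and the vanishing-drift regime: with bounded probability the buffer returns to a state from which the remaining walk produces no further insertions, so the event $\{N_a \leq k\}$ retains positive probability uniformly in $N$, preventing $\mathbb{P}(N_a \geq f(N))$ from tending to $1$ for any $f(N)=\omega(1)$.
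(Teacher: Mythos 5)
Your reduction of $N_a$ to the running maximum $\max_{0\le j\le n} W_j$ with $W_j=j-J_j$ is a genuinely different route from the paper. The paper never forms a supremum: it rewrites $\{N_a\ge f(N)\}$ as a first-passage comparison $\{\tau_{p^+}(f(N)+N-m)\le \tau_p(N)\}$ between two renewal sums, conditions on a sample-mean event $\mathcal{I}_N$ of limiting probability $\tfrac12$, and bounds $\mathbb{P}(\mathcal{H}_N\mid\mathcal{I}_N)$ away from $1$ with the ordinary CLT, leaning on an informal ``$m=\mathcal{O}(\sqrt{N})$'' to rule out the degenerate case. But as written, your proposal does not close the gap you flag as ``the main obstacle,'' and worse, your own functional limit theorem points against the statement you are trying to prove.

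The difficulty is this. If $N_a/\sqrt{N}\Rightarrow M$ with $M=\sup_{t\in[0,1]}\bigl(c't+\sigma B(t)\bigr)$, $c'>0$, $\sigma>0$, then $M>0$ almost surely, because Brownian motion started at $0$ attains strictly positive values on every interval $(0,\varepsilon]$. Hence $\mathbb{P}(M\le 0)=0$, and for any fixed $k$, $\mathbb{P}(N_a\le k)=\mathbb{P}\bigl(N_a/\sqrt{N}\le k/\sqrt{N}\bigr)\to 0$: the lower bound $\liminf_N\mathbb{P}(N_a\le k)>0$ that you propose to extract from the regenerative structure is not a missing lemma to be supplied --- it is false under the FCLT you have just asserted. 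Worse still, the same FCLT gives $\mathbb{P}(N_a\ge f(N))=\mathbb{P}\bigl(N_a/\sqrt{N}\ge f(N)/\sqrt{N}\bigr)\to\mathbb{P}(M\ge 0)=1$ for every $f(N)=o(\sqrt{N})$, which is the opposite of the theorem's conclusion on the entire range between $\omega(1)$ and $o(\sqrt{N})$. Before any patch to the last step can work, you must resolve this contradiction head-on: either the diffusive heuristic fails here for a reason you have not identified, or the theorem itself needs a more careful statement. A secondary and lesser issue: the increments $W_{j+1}-W_j=1-(J_{j+1}-J_j)$ are i.i.d.\ only if Jack's process is Poisson; for general $p(\cdot)$ the increment $J_{j+1}-J_j$ depends on the residual life of Jack's process at $T_j$, which carries memory, so the Donsker step for the triangular array needs a justification other than a classical i.i.d.\ invariance principle (e.g.\ a renewal-theoretic FCLT for the pair of coupled counting processes).
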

See the proof in Appendix~\ref{ap10}.

\subsection{Packet delays due to buffering}
Our scheme requires Bob to slow down the packet stream to buffer packets, which results in packet delays. According to Lemma~\ref{lem:2}, since Bob receives the $i^{\mathrm{th}}$ packet at $\tau_p(i)$ and he sends it at time $\frac{\tau_p(i)}{1-\rho_1}$, Bob causes a delay of $\frac{\tau_p(i)\rho_1}{1-\rho_1}$ for the $i^{\mathrm{th}}$ packet. The average delay for the $N$ packets transmitted by Bob goes to $\infty$ as $N \to \infty$ because:
\begin{align}
\label{eq:ap11} 	\frac{\rho_1}{1-\rho_1}  \sum_{i=1}^{N} \frac{\mathbb{E}[\tau_p(i)]}{N}
= \frac{\rho_1}{1-\rho_1}  \sum_{i=1}^{N} i\frac{\lambda^{-1}}{N}= \frac{\rho_1}{1-\rho_1}\frac{N+1}{2}   \lambda^{-1}.
\end{align}
\noindent Note that each packet is delayed for an amount of time which is proportional to its time of arrival. According to the proof of Lemma~\ref{lem:2}, this large delay does not help Willie detect Bob's actions because Willie does not know the original packet timings but instead only knows the statistical properties of them, which change only slightly.



\subsection{Higher throughput via timing channel and bit insertion}
In this paper, Alice is allowed to buffer packets transmitted by Jack and release them when it is necessary; thus she is able to alter the timings of the packets. This suggests that Alice can also alter the timings of the packets to send information to Bob~\cite{verdubitsq} to achieve a higher throughput for sending covert information. However, this would require Alice and Bob to share a secret key (unknown to adversary Willie) prior to the communication which is not possible in many scenarios. Also, sending the information through IPDs (timing channel) is sensitive to the noise of the timings and thus not applicable in channels with a high level of noise in timings, such as complex channels in which multiple streams of packets are mixed and separated. In addition, a timing channel approach does not work over channels with zero capacity when packet timing is employed (e.g., deterministic queues). However, packet insertion works over such channels. Fig.~\ref{fig:complexchannel} depicts an example.

\begin{figure}
	\begin{center}
		\includegraphics[width=\textwidth,height=\textheight,keepaspectratio]{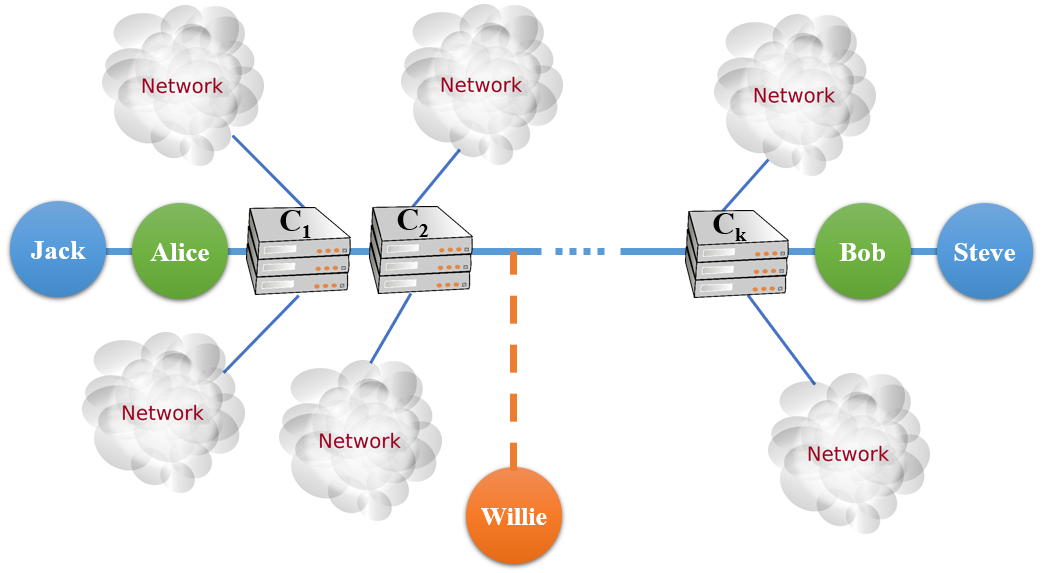}
	\end{center}
	\caption{Example of a complex network with high level of timing noise, in which multiple packet streams are mixed and separated in centers $C_1, C_2,\ldots, C_K$.}
	\label{fig:complexchannel}
\end{figure}

If we assume Alice and Bob can share a codebook and the altering of timings in the channel can be modeled by a queue, sending information via packet timing is studied for Poisson packet channels in~\cite[Theorem 2]{soltani2015covert} and for renewal channels in~\cite[Theorem 5]{soltani2016allerton}.

Another way to communicate covertly on a packet channel is bit insertion, where 
Alice inserts bits in a subset of the packets~\cite{soltani2018fundamentalbit}. This technique requires that packets have available space in their payload and a minimum of one bit in their header. In addition, Alice and Bob need to share a secret prior to the communication. These conditions can be satisfied only in some scenarios such as video streaming applications with variable bit rate codecs.


\subsection{Covertness of scaling up/down the IPDs}
In Lemma~\ref{lem:2}, we showed that when Bob scales down the IPDs such that their pdf becomes $p^{-}(x,\rho_1)=(1-\rho_1) p\left(\left(1-\rho_1\right)x\right)$, if conditions (\ref{c1}-\ref{c3}) hold, Bob's scaling is covert as long as $\rho_1=\mathcal{O}(1/\sqrt{N})$. Similarly, we can show that if he scales up IPDs such that their pdf becomes $p^{+}(x,\rho_1)=(1-\rho_1)^{-1} p\left(x/\left(1-\rho_1\right)\right)$, Bob's scaling is covert as long as $\rho_1=\mathcal{O}(1/\sqrt{N})$ and conditions (\ref{c1}-\ref{c3}) hold when $p^{-}(\cdot)$ is replaced with $p^{+}(\cdot)$.
\subsection{Packets in Alice's buffer after the second phase}
The construction of Theorems~\ref{thWillieABS} implies that Alice will have $\mathcal{O}(\sqrt{N})$ packets in her buffer at the end of the second phase. Assuming that Alice will be always on the link, having $\mathcal{O}(\sqrt{N})$ packets in her buffer does not cause any problems since after Alice and Bob's communication is done, they will only relay Jack's packets. Note that Alice can insert the packets in her buffer into the channel according to the timings of a Poisson process with a small rate. The covertness analysis of this scheme is challenging and requires calculating the relative entropy between a renewal process and its superposition with a Poisson process, which is relegated to future work.



\section{Future Work}
A key goal is to establish the fundamental limits of packet insertion in channels whose packet timings follow a general point process. We will let Alice insert packets on the channel according to a Poisson process with a small rate, independent of the channel. Then, we plan to employ the results of Girsanov's theorem to calculate the relative entropy between the point process governing the timings of the packets on the channel and the superposition of the point processes with a Poisson process. Another future work is analyzing covert throughout when the packet timings of the channel follow a Poisson process with a variable rate; in this case, we expect to be able to exploit Willie's difficulty in estimating the current packet rate under $H_0$ to allow for the insertion of $\mathcal{O}({N})$ packets.

\section{Conclusion} \label{sec:con1}
We present two scenarios for covert communication on a packet channel. In a Poisson channel where packet timings are governed by a Poisson process, Alice inserts her own packets into the channel but does not modify the timing of other packets. We established that Alice can covertly transmit $\mathcal{O}(\sqrt{\lambda T})$ packets to Bob in a time interval of length $T$; conversely, if Alice inserts $\omega(\sqrt{\lambda T})$ packets, she will be detected by Willie. In a renewal channel where the packet timings are governed by a general renewal process, we showed that Alice can covertly insert $\mathcal{O}(\sqrt{N})$ packets into the channel in a packet stream of length $N$. Conversely, if she inserts $\omega(\sqrt{N})$ packets, she will be detected by Willie with high probability. 

\appendices




\section{Proof of~\eqref{eq:44}} \label{ap2t}
Define $\mathcal{A}=\{N_j+N_a\leq \lambda T + \sqrt{{\lambda T}/{ \alpha}}\}$ and $\mathcal{B}=N_j < b\sqrt{\lambda T}+\lambda T$. Employing~\eqref{eq:1} and the law of total probability yields:
\begin{align} \label{eq:2}
\mathbb{P}_{MD}=\mathbb{P}\left(\mathcal{A}\right) =   \mathbb{P}(\mathcal{A}|\overline{\mathcal{B}}) \mathbb{P}(\overline{\mathcal{B}})  + \mathbb{P}(\mathcal{A}|\mathcal{B}) \mathbb{P}(\mathcal{B}) \leq   \mathbb{P}(\mathcal{A}|\overline{\mathcal{B}}) + \mathbb{P}(\mathcal{B}).
\end{align}
Consider the first term on the right hand side (RHS) of~\eqref{eq:2}. Substituting the events $\mathcal{A}$ and $\mathcal{B}$ yields:
\begin{align}
\nonumber \mathbb{P}(\mathcal{A}|\overline{\mathcal{B}})  &= \mathbb{P}\left(N_j+N_a\leq \lambda T + \sqrt{{\lambda T}/{ \alpha}}\bigg|N_j \geq  b\sqrt{\lambda T}+\lambda T\right),\\ 
\nonumber &\stackrel{(a)}{\leq} \mathbb{P}\left(b\sqrt{\lambda T}+\lambda T +N_a\leq \lambda T + \sqrt{{\lambda T}/{ \alpha}}\bigg|N_j \geq  b\sqrt{\lambda T}+\lambda T\right),\\
\label{eq:3} &= \mathbb{P}\left(N_a \leq  \sqrt{{\lambda T}/{ \alpha}} - b\sqrt{\lambda T} \bigg|N_j \geq  b\sqrt{\lambda T}+\lambda T\right)= \mathbb{P}(N_a\leq  \sqrt{{\lambda T}/{ \alpha}} - b\sqrt{\lambda T}).
\end{align}
where $(a)$ is true since the condition in the probability is $N_j\geq b\sqrt{\lambda T}+\lambda T$. If Alice inserts $N_a=\omega\left({\sqrt{\lambda T}}\right)$ packets, then~\eqref{eq:3} yields
$$\lim\limits_{T \to \infty} \mathbb{P}(\mathcal{A}|\overline{\mathcal{B}}) =0.$$
Consider the second term on the RHS of~\eqref{eq:2}. From \cite[p. 40]{cox1962renewal},
\begin{align}
\nonumber \lim\limits_{T \to \infty} \mathbb{P}\left(N_j<b\sqrt{\lambda T} + \lambda T\right)=\frac{1}{\sqrt{2 \pi}}\int_{-\infty}^{b} e^{-\frac{x^2}{2}}dx.
\end{align}
\noindent Since $b$ is arbitrary, we can choose $b$ small enough such that $ \lim\limits_{T \to \infty} \mathbb{P}\left(N_j<b\sqrt{\lambda T} + \lambda T\right)<\beta$. Thus, if $N_a=\omega\left(\sqrt{\lambda T}\right)$, then $\lim\limits_{T \to \infty}\mathbb{P}_{MD}< \beta$  for all $\beta>0$.

\section{Proof of~\eqref{eq:5}} \label{ap1}

Note that $X_p(\cdot)$ and $\tau_p(\cdot)$ correspond to a renewal process whose inter-arrival pdf is $p(x)$. Let $M=N(1+\rho_1/2)$. Since $\rho_1= \frac{\epsilon}{\sqrt{c N}}$ and $m = X_p\left(  \frac{\tau_p\left(N\right)}{1-\rho_1}\right) - N$,
\begin{align}
\nonumber \mathbb{P}\left(m \geq \epsilon \sqrt{\frac{N}{4 c}}\right) =\mathbb{P}\left(m \geq \frac{\rho_1 N}{{2}}\right)&= \mathbb{P}\left( X_p\left(  \frac{\tau_p\left(N\right)}{1-\rho_1}\right) - N\geq  \frac{\rho_1 N}{{2}}\right),\\
\nonumber &=  \mathbb{P}\left(X_p\left(  \frac{\tau_p\left(N\right)}{1-\rho_1}\right)  \geq  N\left(1+ \frac{\rho_1}{{2}}\right) \right)=  \mathbb{P}\left(X_p\left(  \frac{\tau_p\left(N\right)}{1-\rho_1}\right)  \geq  M \right),\\
\nonumber &\stackrel{(b)}{=}  \mathbb{P}\left(\left(1-\rho_1\right) \tau_p\left(M\right)\leq \tau_p\left(N\right) \right) \stackrel{(c)}{=}  \mathbb{P}\left(\left(1-\rho_1\right) \sum_{i=1}^{M} A_i^{(J)} \leq \sum_{i=1}^{N} A_i^{(J)}  \right),\\
 &\stackrel{(d)}{=}  \mathbb{P}\left(\left(1-\rho_1\right) \sum_{i=N+1}^{M} A_i^{(J)} \leq \rho_1 \sum_{i=1}^{N} A_i^{(J)}  \right),\\
\nonumber  &=  \mathbb{P}\left( \sum_{i=N+1}^{M} \frac{A_i^{(J)}}{M -N} \leq  \frac{\rho_1 N}{(M-N)(1-\rho_1)}  \sum_{i=1}^{N} \frac{A_i^{(J)}}{N}  \right),\\ 
\label{eq:24}  &\stackrel{(e)}{=}\mathbb{P}\left(\mathcal{C}_N \right) \geq   \mathbb{P}\left(\mathcal{C}_N | \mathcal{D}_N  \right)\mathbb{P}\left(\mathcal{D}_N  \right).
\end{align}

\noindent where $(b)$ follows from~\eqref{def6e3}, $A_1^{(J)}, A_2^{(J)}, \ldots$ are the IPDs of Jack's transmitted stream, $(c)$ is true since~\eqref{def6e1} is true, $(d)$ follows from removing the common summands, $(e)$ follows from defining $\mathcal{C}_N=\left\{ \sum_{i=N+1}^{M} \frac{A_i^{(J)}}{M -N} \leq  \frac{\rho_1 N}{(M-N)(1-\rho_1)}  \sum_{i=1}^{N} \frac{A_i^{(J)}}{N}  \right\}$,  and the last step follows from the law of total probability with $\mathcal{D}_N=\left\{\frac{\rho_1 N}{(M-N)(1-\rho_1)}\leq 4\right\}$. Since $M=N \left(1+\frac{\rho_1}{2}\right)$ and $\rho_1= \frac{\epsilon}{\sqrt{c N}}$,
\begin{align} \lim\limits_{N \to \infty} \mathbb{P} \left(\mathcal{D}_N\right) = \lim\limits_{N \to \infty} \mathbb{P} \left(\frac{\rho_1 N}{(M-N)(1-\rho_1)}\leq 4\right) \label{eq:22} &= \lim\limits_{N \to \infty} \mathbb{P} \left(\frac{2}{1-\rho_1}\leq 4\right) =1 .
\end{align}
\noindent By~\eqref{eq:24} and~\eqref{eq:22},
\begin{align}
\label{apeq31} \lim\limits_{N \to \infty} \mathbb{P}\left(m \geq \epsilon \sqrt{\frac{N}{ 4 c}}\right)  &\geq  \lim\limits_{N \to \infty} \mathbb{P}\left(\mathcal{C}_N | \mathcal{D}_N  \right) \geq  \lim\limits_{N \to \infty} \mathbb{P}\left( \sum_{i=N+1}^{M} \frac{A_i^{(J)}}{M -N} \leq  {4} \sum_{i=1}^{N} \frac{A_i^{(J)}}{N}  \right)=1.
\end{align}
\noindent where the last step is true since the WLLN yields $\sum_{i=N+1}^{M} \frac{A_i^{(J)}}{M -N} \xrightarrow{P} \lambda^{-1}$, $ -4 \sum_{i=1}^{N} \frac{A_i^{(J)}}{N} \xrightarrow{P} -4 \lambda^{-1}$, and thus $ \sum_{i=N+1}^{M} \frac{A_i^{(J)}}{M -N}-4 \sum_{i=1}^{N} \frac{A_i^{(J)}}{N} \xrightarrow{P} -3\lambda^{-1}$, as $N \to \infty$\footnote{If $Y_N \xrightarrow{P} Y$ and $Z_N \xrightarrow{P} Z$, then $Y_N + Z_N \xrightarrow{P} Y + Z$ for any sequences of random variables $Y_N$ and $Z_N$ (\cite[prob. 5 p. 262]{shiryaev1996probability}).}. Hence, the proof is complete.

\section{Proof of~\eqref{eq:51}} \label{ap2} The argument follows analogously to that of~\eqref{eq:5}. If $A_1^{(J)}, A_2^{(J)}, \ldots$ are the IPDs of Jack's transmitted stream and $ M'=N(1+2\rho_1)$, then:
\begin{align}
\nonumber \mathbb{P}\left(m \leq \epsilon \sqrt{\frac{4N}{  c}}\right)  &=  \mathbb{P}\left( \sum_{i=N+1}^{M'} \frac{A_i}{M' -N} \geq  \frac{\rho_1 N}{(M'-N)(1-\rho_1)}  \sum_{i=1}^{N} \frac{A_i}{N}  \right)=  \mathbb{P}\left( \sum_{i=N+1}^{M'} \frac{A_i^{(J)}}{M' -N} \geq  \frac{1}{2}  \sum_{i=1}^{N} \frac{A_i^{(J)}}{N}  \right),
\end{align}
\noindent where the last step is true since $\frac{\rho_1 N}{(M'-N)(1-\rho_1)} = \frac{1}{2 \left(1-\rho_1\right)}\geq \frac{1}{2}$. Then, similar to the arguments that leads to~\eqref{apeq31}, we can show that
\begin{align}
\nonumber \lim_{N \to \infty }\mathbb{P}\left(\sum_{i=N+1}^{M'} \frac{A_i^{(J)}}{M' -N}\geq \frac{1}{2} \sum_{i=1}^{N} \frac{A_i^{(J)}}{N} \right)=1.
\end{align} 
Thus,~\eqref{eq:51} is proved. 
\section{Proof of~\eqref{c4}} \label{apc4} 
From [Ch. 2.6]\cite{kullback1968information}, $c$ is the Fisher information which is given by
\begin{align}
\label{eq:ap4}c=\int_{x=0}^{\infty} p(x) \frac{1}{p(x)^2} \left(\frac{\partial p^{-}(x,\rho_1)}{\partial \rho_1}\bigg|_{\rho_1=0}\right)^2 dx.
\end{align}
\noindent Since $p^{-}(x,\rho_1)=(1-\rho_1)p(x(1-\rho_1))$,
\begin{align}
\label{eq:ap6} \frac{\partial p^{-}(x,\rho_1)}{\partial \rho_1} \bigg|_{\rho_1=0} &= \frac{\partial \left(\left(1-\rho_1\right) p\left(\left(1-\rho_1\right)x\right)\right)}{\partial \rho_1} \bigg|_{\rho_1=0}
 =-p(x) - x \frac{d p(x)}{dx}.
\end{align}
\noindent Therefore,~\eqref{eq:ap4} yields
\begin{align}
\label{eq:ap5} c&=\int_{x=0}^{\infty} p(x) + 2x \frac{d p(x)}{dx} +  \frac{x^2}{p(x)} \left(\frac{d p(x)}{dx}\right)^2 dx=1+ \int_{x=0}^{\infty} 2x \frac{d p(x)}{dx} +  \frac{x^2}{p(x)} \left(\frac{d p(x)}{dx}\right)^2 dx.
\end{align}
\noindent Taking integral of the both sides of~\eqref{eq:ap6} yields
\begin{align}
\nonumber \int_{x=0}^{\infty}\left(p(x)  + x \frac{d p(x)}{dx}\right)dx &= -\int_{x=0}^{\infty}\frac{\partial p^{-}(x,\rho_1)}{\partial \rho_1}\Big|_{\rho_1=0}dx =0,
\end{align}
\noindent where the last step is true because of the regulatory condition ~\eqref{c3}. Consequently,
\begin{align}
\label{eq:ap7} \int_{x=0}^{\infty}  x \frac{d p(x)}{dx} dx  = - \int_{x=0}^{\infty} p(x)  dx  =-1.
\end{align}
\noindent By~\eqref{eq:ap7},~\eqref{eq:ap5} yields
\begin{align}
\nonumber c&=-1+ \int_{x=0}^{\infty}  \frac{x^2}{p(x)} \left(\frac{d p(x)}{dx}\right)^2 dx=-1+ \int_{x=0}^{\infty}  p(x) {x^2} \left(\frac{d \log p(x)}{dx}\right)^2 dx.
\end{align}

\section{Proof of Theorem~\ref{thsimp}} \label{ap10} The total number of (Alice's and Jack's) packets that Alice transmits is $N_{aj}=N_a+N-m$, where $N_a$ is the number of Alice's packets inserted into the channel, $N$ is the number packets transmitted by Jack, and $m$ is the number of packets in Alice's buffer when Alice's scheme ends. If $f(N)=\omega(1)$, then:
	\begin{align} \label{eq:a00}
	\mathbb{P}\left(N_a\geq f(N)\right) =  \mathbb{P}\left(N_{aj} \geq f(N) + N -m \right) .
	\end{align}
	\noindent Let $X_{p^+}(t)$ be the total number of packets transmitted by Alice within the interval $[0,t]$, and  $\tau_{p^+}(i)$ be the time of arrival of the $i^{\mathrm{th}}$ packet transmitted by Alice. Note that $X_{p^+}(\cdot)$ and $\tau_{p^+}(\cdot)$ correspond to a renewal process whose inter-arrival pdf is 
	$p^+(x,\rho_4)=\frac{p\left({x}/\left(1-\rho_4\right)\right)}{1-\rho_4}$. Also, recall that $\tau_{p}(i)$ is the time of arrival of the $i^{\mathrm{th}}$ packet transmitted by Jack. Since Jack transmits $N$ packets in a time interval of length $\tau_{p}\left( N\right)$, and Alice uses this time to transmit $N_{aj}$ packets, $N_{aj}=X_{p^+}\left(\tau_{p}\left( N\right)\right)$. By~\eqref{eq:a00},
	\begin{align} 
	\nonumber \mathbb{P}\left(N_a\geq f(N)\right)=  \mathbb{P}\left(N_{aj} \geq f(N) + N -m \right) 
	\nonumber &=\mathbb{P}\left(X_{p^+}\left(\tau_{p}\left( N\right)\right)\geq f(N)+N-m\right),\\
	\label{eq:a1}&=\mathbb{P}\left(\tau_{p}\left( N\right)\leq \tau_{p^+}\left(f(N)+N-m\right)\right),
	\end{align} 
	\noindent where the last step is true since~\eqref{def6e3} is true. Note that $\tau_{p}\left( N\right)=\sum_{i=1}^{N}A_i^{(J)}$, where $A_i^{(J)}$s are samples of $p(x)$, and $\tau_{p^+}\left(f(N)+N-m\right)=\sum_{i=1}^{f(N)+N-m}A_i^{(A)}$, where $A_1^{(A)}$s are samples of $p_1(x,\rho_4)$. Let ${C}_i=\frac{{A}_i^{(A)}}{1-\rho_4}$ for $1\leq i\leq {N}$. Therefore, ${C}_1, {C}_2, \cdots$ are samples of  $p(x)$. By~\eqref{eq:a1}
	\begin{align}
 \mathbb{P}\left(N_{a} \geq f(N) \right)
	=\mathbb{P}\left(\sum_{i=1}^{N}A_i^{(J)}\geq\sum_{i=1}^{f(N)+N-m}A_i^{(A)} \right)
	 \label{eq:ap2} =\mathbb{P}\left(\sum_{i=1}^{N}A_i^{(A)}\geq\left(1-\rho_4\right)\sum_{i=1}^{f(N)+N-m}{C}_i \right)
	 =\mathbb{P}\left( \mathcal{H}_N \right),
	\end{align}
	\noindent where
	\begin{align} 
	\label{eq:ap3} \mathcal{H}_N &= \left\{ \frac{1}{N}\sum_{i=1}^{N}A_i^{(J)}\geq\frac{\left(1-\rho_4\right)\left(f(N)+N-m\right)}{N}  \sum_{i=1}^{f(N)+N-m}\frac{{C}_i}{f(N)+N-m} \right\}.
	\end{align}
	\noindent Let  $\mathcal{I}_N = \left\{\frac{1}{f(N)+N-m}\sum_{i=1}^{f(N)+N-m}{C}_i \geq \lambda^{-1}\right\}$.	The law of total probability yields 
	\begin{align}
	  \mathbb{P}\left(N_a\geq f(N)\right)=\mathbb{P}\left(\mathcal{H}_N \right)
	\label{eq:ap10}   &=  \mathbb{P}\left(\mathcal{H}_n|\mathcal{I}_n\right) \mathbb{P}\left(\mathcal{I}_n\right) +  \mathbb{P}\left(\mathcal{H}_n|\overline{ \mathcal{I}_N}\right) \mathbb{P}\left(\overline{\mathcal{I}_N}\right).
	\end{align}
	\noindent Consider $\mathbb{P}(\mathcal{I}_N)$. Since $N-m>0$ and $f(N)=\omega(1)$, $f(N)+N-m\to \infty$ as $N\to \infty$. Note that $E[C_i]=\lambda^{-1}$ and thus the CLT yields $\lim\limits_{N \to \infty}P\left(\mathcal{I}_n\right)=\frac{1}{2}$. By~\eqref{eq:ap10},
	\begin{align}
 \mathbb{P}\left(N_a\geq f(N)\right) = \frac{1}{2}   \mathbb{P}\left(\mathcal{H}_n|\mathcal{I}_n \right) + \frac{1}{2}  \mathbb{P}\left(\mathcal{H}_n|\overline{ \mathcal{I}_N}\right)	\label{eq:ap9}  & \leq \frac{1}{2}   \mathbb{P}\left(\mathcal{H}_n|\mathcal{I}_n \right) + \frac{1}{2} .
	\end{align}
	\noindent Consider $\mathbb{P}\left(\mathcal{H}_n|\mathcal{I}_n \right)$, 
	\begin{align}
	\nonumber \mathbb{P}(\mathcal{H}_N|\mathcal{I}_N) &\leq \mathbb{P}\left(\sum_{i=1}^{N}\frac{A_i^{(J)}}{N}\geq\frac{(N+f(N)-m)(1-\rho_4)}{N} \lambda^{-1}\Bigg|\mathcal{I}_N\right),\\
	\nonumber &= \mathbb{P}\left(\sum_{i=1}^{N}\frac{A_i^{(J)}}{N}\geq\frac{N+f(N)-m}{N}(1-\rho_4) \lambda^{-1}\right),
	\\
	\nonumber &=\mathbb{P}\left(\sum_{i=1}^{N } \frac{A_i^{(J)}-\lambda^{-1}}{\sqrt{N} \lambda } \geq  \frac{(f(N)-m)(1-\rho_4)}{\sqrt{N}}-\rho_4 \sqrt{N}  \right).
	\end{align}
	\noindent By~\eqref{rho4Val}, $\rho_4 \sqrt{N}= c'$. Therefore,
	\begin{align}
	\nonumber \mathbb{P}(\mathcal{H}_N|\mathcal{I}_N)=\mathbb{P}\left(\sum_{i=1}^{N }\frac{ A_i^{(J)}-\lambda^{-1}}{\sqrt{N} \lambda} \geq  \frac{f(N)-m}{\sqrt{N}}-c'\frac{f(N)-m}{N}-c' \right) .
	\end{align}
	\noindent By the CLT, $\lim_{N \to \infty }\mathbb{P}(\mathcal{H}_N|\mathcal{I}_N)=\kappa$ where $0\leq \kappa \leq 1$. If $0 \leq \kappa <1$,~\eqref{eq:ap9} yields   $\mathbb{P}\left(N_a\geq f(N)\right) <1$. The value $\kappa=1$ is achievable only if $f(N)-m=-\omega(\sqrt{N})$. Intuitively, $m=\mathcal{O}(\sqrt{N})$ and thus $\kappa=1$ requires $f(N)=\mathcal{O}(\sqrt{N})-\omega(\sqrt{N})<0$. Consequently, there is no $f(N)=\omega(1)$ such that $\lim\limits_{N \to \infty}\mathbb{P}\left(N_a \geq f(N) \right) =1$.
\QEDA


\bibliographystyle{ieeetr}

\begin{thebibliography}{10}

\bibitem{soltani2015covert}
R.~Soltani, D.~Goeckel, D.~Towsley, and A.~Houmansadr, ``Covert communications
  on {P}oisson packet channels,'' in {\em 2015 53rd Annual Allerton Conference
  on Communication, Control, and Computing (Allerton)}, pp.~1046--1052, IEEE,
  2015.

\bibitem{soltani2016allerton}
R.~Soltani, D.~Goeckel, D.~Towsley, and A.~Houmansadr, ``Covert communications
  on renewal packet channels,'' in {\em 2016 54th Annual Allerton Conference on
  Communication, Control, and Computing (Allerton)}, IEEE, 2016.

\bibitem{lopez2008wireless}
J.~L{\'o}pez and J.~Zhou, {\em Wireless sensor network security}, vol.~1.
\newblock Ios Press, 2008.

\bibitem{nazanin_ISIT2017}
N.~Takbiri, A.~Houmansadr, D.~L. Goeckel, and H.~Pishro-Nik, ``Limits of
  location privacy under anonymization and obfuscation,'' in {\em International
  Symposium on Information Theory (ISIT)}, (Aachen, Germany), pp.~764--768,
  IEEE, 2017.

\bibitem{hadian2016privacy}
M.~Hadian, X.~Liang, T.~Altuwaiyan, and M.~M. Mahmoud, ``Privacy-preserving
  mhealth data release with pattern consistency,'' in {\em Global
  Communications Conference (GLOBECOM), 2016 IEEE}, pp.~1--6, IEEE, 2016.

\bibitem{nichols2001wireless}
R.~K. Nichols, P.~Lekkas, and P.~C. Lekkas, {\em Wireless security}.
\newblock McGraw-Hill Professional Publishing, 2001.

\bibitem{bloch11pls}
M.~Bloch and J.~Barros, {\em Physical-Layer Security}.
\newblock Cambridge, UK: Cambridge University Press, 2011.

\bibitem{talb2006}
J.~Talbot and D.~Welsh, {\em Complexity and Cryptography: An Introduction}.
\newblock Cambridge University Press, 2006.

\bibitem{snowden}
``Edward {Snowden}: Leaks that exposed {US} spy programme.''
  \url{http://www.bbc.com/news/world-us-canada-23123964}, Jan 2014.

\bibitem{ker07pool}
A.~D. Ker, ``Batch steganography and pooled steganalysis,'' vol.~4437 of {\em
  Lecture Notes in Computer Science}, pp.~265--281, Springer Berlin Heidelberg,
  2007.

\bibitem{simon1994spread}
M.~K. Simon, J.~K. Omura, R.~A. Scholtz, and B.~K. Levitt, {\em Spread spectrum
  communications handbook}, vol.~2.
\newblock Citeseer, 1994.

\bibitem{bash_isit2012}
B.~Bash, D.~Goeckel, and D.~Towsley, ``Square root law for communication with
  low probability of detection on {AWGN} channels,'' in {\em Information Theory
  Proceedings (ISIT), 2012 IEEE International Symposium on}, pp.~448--452, July
  2012.

\bibitem{bash_jsac2013}
B.~Bash, D.~Goeckel, and D.~Towsley, ``Limits of reliable communication with
  low probability of detection on {AWGN} channels,'' {\em Selected Areas in
  Communications, IEEE Journal on}, vol.~31, pp.~1921--1930, September 2013.

\bibitem{bash2015hiding}
B.~A. Bash, D.~Goeckel, D.~Towsley, and S.~Guha, ``Hiding information in noise:
  Fundamental limits of covert wireless communication,'' {\em IEEE
  Communications Magazine}, vol.~53, no.~12, pp.~26--31, 2015.

\bibitem{goeckel2016covert}
D.~Goeckel, B.~Bash, S.~Guha, and D.~Towsley, ``Covert communications when the
  warden does not know the background noise power,'' {\em IEEE Communications
  Letters}, vol.~20, no.~2, pp.~236--239, 2016.

\bibitem{bash_isit2013}
B.~Bash, S.~Guha, D.~Goeckel, and D.~Towsley, ``Quantum noise limited optical
  communication with low probability of detection,'' in {\em Information Theory
  Proceedings (ISIT), 2013 IEEE International Symposium on}, pp.~1715--1719,
  July 2013.

\bibitem{soltani2014covert}
R.~Soltani, B.~Bash, D.~Goeckel, S.~Guha, and D.~Towsley, ``Covert single-hop
  communication in a wireless network with distributed artificial noise
  generation,'' in {\em Communication, Control, and Computing (Allerton), 2014
  52nd Annual Allerton Conference on}, pp.~1078--1085, IEEE, 2014.

\bibitem{soltani2018covert}
R.~Soltani, D.~Goeckel, D.~Towsley, B.~Bash, and S.~Guha, ``Covert wireless
  communication with artificial noise generation,'' {\em IEEE Transactions on
  Wireless Communications}, 2018.

\bibitem{jaggi_isit2014}
S.~Kadhe, S.~Jaggi, M.~Bakshi, and A.~Sprintson in {\em Information Theory
  Proceedings (ISIT), 2014 IEEE International Symposium on}.

\bibitem{bloch2016covert}
M.~R. Bloch, ``Covert communication over noisy channels: A resolvability
  perspective,'' {\em IEEE Transactions on Information Theory}, vol.~62, no.~5,
  pp.~2334--2354, 2016.

\bibitem{wang15covert-isit}
L.~Wang, G.~W. Wornell, and L.~Zhang, ``Limits of low-probability-of-detection
  communication over a discrete memoryless channel,'' in {\em Proc. {IEEE} Int.
  Symp. Inform. Theory (ISIT)}, (Hong Kong, China), 2015.

\bibitem{lee14posratecovert}
S.~Lee and R.~Baxley, ``{Achieving positive rate with undetectable
  communication over AWGN and Rayleigh channels},'' in {\em {Proc. {IEEE}
  Int.~Conf.~Commun.~(ICC)}}, pp.~780--785, June 2014.

\bibitem{jaggi_uncertain}
P.~H. Che, M.~Bakshi, C.~Chan, and S.~Jaggi, ``Reliable deniable communication
  with channel uncertainty,'' in {\em Proc. Information Theory Workshop (ITW)},
  2014.

\bibitem{arumugam2018covert}
K.~S.~K. Arumugam and M.~R. Bloch, ``Covert communication over a k-user
  multiple access channel,'' {\em arXiv preprint arXiv:1803.06007}, 2018.

\bibitem{8278022}
K.~S.~K. {Arumugam} and M.~R. {Bloch}, ``Covert communication over broadcast
  channels,'' in {\em 2017 IEEE Information Theory Workshop (ITW)},
  pp.~299--303, Nov 2017.

\bibitem{tahmasbi2018covert}
M.~Tahmasbi, A.~Savard, and M.~R. Bloch, ``Covert capacity of non-coherent
  rayleigh-fading channels,'' {\em arXiv preprint arXiv:1810.07687}, 2018.

\bibitem{NIST:NCSC}
U.~D. National Computer Security~Center, ``{Trusted Computer System Evaluation
  Criteria}.'' http://csrc.nist.gov/publications/history/dod85.pdf, Tech. Rep.
  DOD 5200.28-STD, Dec. 1985.

\bibitem{handel1996hiding}
T.~G. Handel and M.~T. Sandford~II, ``Hiding data in the {OSI} network model,''
  in {\em International Workshop on Information Hiding}, pp.~23--38, Springer,
  1996.

\bibitem{rowland1997covert}
C.~H. Rowland, ``Covert channels in the {TCP/IP} protocol suite,'' {\em First
  Monday}, vol.~2, no.~5, 1997.

\bibitem{giffin2002covert}
J.~Giffin, R.~Greenstadt, P.~Litwack, and R.~Tibbetts, ``Covert messaging
  through {TCP} timestamps,'' in {\em International Workshop on Privacy
  Enhancing Technologies}, pp.~194--208, Springer, 2002.

\bibitem{murdoch2005embedding}
S.~J. Murdoch and S.~Lewis, ``Embedding covert channels into {TCP/IP},'' in
  {\em International Workshop on Information Hiding}, pp.~247--261, Springer,
  2005.

\bibitem{Girling87}
C.~Girling, ``Covert channels in lan's,'' {\em Software Engineering, IEEE
  Transactions on}, vol.~SE-13, pp.~292--296, Feb 1987.

\bibitem{Cabuk:2004:ICT:1030083.1030108}
S.~Cabuk, C.~E. Brodley, and C.~Shields, ``{IP} covert timing channels: Design
  and detection,'' in {\em Proceedings of the 11th ACM Conference on Computer
  and Communications Security}, CCS '04, (New York, NY, USA), pp.~178--187,
  ACM, 2004.

\bibitem{berk2005detection}
V.~Berk, A.~Giani, and G.~Cybenko, ``Detection of covert channel encoding in
  network packet delays (tech. rep. tr2005-536),'' {\em Department of Computer
  Science, Dartmouth College (November 2005)}, 2005.

\bibitem{shah2006keyboards}
G.~Shah, A.~Molina, M.~Blaze, {\em et~al.}, ``Keyboards and covert channels.,''
  in {\em USENIX Security}, 2006.

\bibitem{cabuk2006network}
S.~Cabuk, ``Network covert channels: Design, analysis, detection, and
  elimination,'' 2006.

\bibitem{liu2009hide}
Y.~Liu, D.~Ghosal, F.~Armknecht, A.-R. Sadeghi, S.~Schulz, and
  S.~Katzenbeisser, ``{Hide and seek in time--robust covert timing channels},''
  in {\em Computer Security--ESORICS 2009}, pp.~120--135, Springer, 2009.

\bibitem{liu2010robust}
Y.~Liu, F.~Armknecht, D.~Ghosal, S.~Katzenbeisser, A.~Sadeghi, and S.~Schulz,
  ``Robust and undetectable covert timing channels for iid traffic,'' in {\em
  Proc. Information Hiding Conf}, 2010.

\bibitem{houmansadr2011coco}
A.~Houmansadr and N.~Borisov, ``{CoCo: coding-based covert timing channels for
  network flows},'' in {\em Information Hiding}, pp.~314--328, Springer, 2011.

\bibitem{Arch2012}
R.~Archibald and D.~Ghosal, ``A covert timing channel based on fountain
  codes,'' in {\em Trust, Security and Privacy in Computing and Communications
  (TrustCom), 2012 IEEE 11th International Conference on}, pp.~970--977, June
  2012.

\bibitem{liu2012network}
G.~Liu, J.~Zhai, and Y.~Dai, ``Network covert timing channel with distribution
  matching,'' {\em Telecommunication Systems}, vol.~49, no.~2, pp.~199--205,
  2012.

\bibitem{sekke2009}
S.~Sellke, C.-C. Wang, S.~Bagchi, and N.~Shroff, ``{TCP/IP} timing channels:
  Theory to implementation,'' in {\em INFOCOM 2009, IEEE}, pp.~2204--2212,
  April 2009.

\bibitem{mazurczyk2012lost}
W.~Mazurczyk, ``Lost audio packets steganography: the first practical
  evaluation,'' {\em Security and Communication Networks}, vol.~5, no.~12,
  pp.~1394--1403, 2012.

\bibitem{mazurczyk2008covert}
W.~Mazurczyk and K.~Szczypiorski, ``Covert channels in {SIP} for {V}o{IP}
  signalling,'' in {\em Global E-Security}, pp.~65--72, Springer, 2008.

\bibitem{rezaei2013analysis}
F.~Rezaei, M.~Hempel, D.~Peng, Y.~Qian, and H.~Sharif, ``Analysis and
  evaluation of covert channels over lte advanced,'' in {\em Wireless
  Communications and Networking Conference (WCNC), 2013 IEEE}, pp.~1903--1908,
  IEEE, 2013.

\bibitem{cunche2014asynchronous}
M.~Cunche, M.~A. Kaafar, and R.~Boreli, ``Asynchronous covert communication
  using bittorrent trackers,'' in {\em High Performance Computing and
  Communications, 2014 IEEE 6th Intl Symp on Cyberspace Safety and Security,
  2014 IEEE 11th Intl Conf on Embedded Software and Syst (HPCC, CSS, ICESS),
  2014 IEEE Intl Conf on}, pp.~827--830, IEEE, 2014.

\bibitem{ahsan2002practical}
K.~Ahsan and D.~Kundur, ``Practical data hiding in {TCP/IP},'' in {\em Proc.
  Workshop on Multimedia Security at ACM Multimedia}, vol.~2, 2002.

\bibitem{lucena2005covert}
N.~B. Lucena, G.~Lewandowski, and S.~J. Chapin, ``Covert channels in {IPv6},''
  in {\em International Workshop on Privacy Enhancing Technologies},
  pp.~147--166, Springer, 2005.

\bibitem{gianvecchio2007detecting}
S.~Gianvecchio and H.~Wang, ``Detecting covert timing channels: an
  entropy-based approach,'' in {\em Proceedings of the 14th ACM conference on
  Computer and communications security}, pp.~307--316, ACM, 2007.

\bibitem{rezaei2017towards}
F.~Rezaei, M.~Hempel, and H.~Sharif, ``Towards a reliable detection of covert
  timing channels over real-time network traffic,'' {\em IEEE Transactions on
  Dependable and Secure Computing}, 2017.

\bibitem{rezaei2015novel}
F.~Rezaei, {\em A Novel Approach towards Real-Time Covert Timing Channel
  Detection}.
\newblock PhD thesis, The University of Nebraska-Lincoln, 2015.

\bibitem{helouet2011covert}
L.~H{\'e}lou{\"e}t and A.~Roumy, ``Covert channel detection using information
  theory,'' {\em arXiv preprint arXiv:1102.5586}, 2011.

\bibitem{Zhao2015}
H.~Zhao and M.~Chen, ``Wlan covert timing channel detection,'' in {\em Wireless
  Telecommunications Symposium (WTS), 2015}, pp.~1--5, April 2015.

\bibitem{kemmerer1983shared}
R.~A. Kemmerer, ``Shared resource matrix methodology: An approach to
  identifying storage and timing channels,'' {\em ACM Transactions on Computer
  Systems (TOCS)}, vol.~1, no.~3, pp.~256--277, 1983.

\bibitem{gianvecchio2008model}
S.~Gianvecchio, H.~Wang, D.~Wijesekera, and S.~Jajodia, ``Model-based covert
  timing channels: Automated modeling and evasion,'' in {\em Recent Advances in
  Intrusion Detection}, pp.~211--230, Springer, 2008.

\bibitem{millen1987covert}
J.~K. Millen, ``Covert channel capacity,'' in {\em Security and Privacy, 1987
  IEEE Symposium on}, pp.~60--60, IEEE, 1987.

\bibitem{millen1989finite}
J.~K. Millen, ``Finite-state noiseless covert channels,'' in {\em Computer
  Security Foundations Workshop II, 1989., Proceedings of the}, pp.~81--86,
  IEEE, 1989.

\bibitem{Anand1998inf}
A.~S. Bedekar and M.~Azizoglu, ``The information-theoretic capacity of
  discrete-time queues,'' {\em Information Theory, IEEE Transactions on},
  vol.~44, no.~2, pp.~446--461, 1998.

\bibitem{martin2006noisy}
K.~Martin and I.~S. Moskowitz, ``Noisy timing channels with binary inputs and
  outputs,'' in {\em International Workshop on Information Hiding},
  pp.~124--144, Springer, 2006.

\bibitem{sekke2007capacity}
S.~H. Sellke, C.-C. Wang, N.~Shroff, and S.~Bagchi, ``Capacity bounds on timing
  channels with bounded service times,'' in {\em Information Theory, 2007. ISIT
  2007. IEEE International Symposium on}, pp.~981--985, IEEE, 2007.

\bibitem{Mosko92Capac}
I.~Moskowitz and A.~Miller, ``The channel capacity of a certain noisy timing
  channel,'' {\em Information Theory, IEEE Transactions on}, vol.~38,
  pp.~1339--1344, Jul 1992.

\bibitem{Moskowitz1994}
I.~Moskowitz and A.~Miller, ``Simple timing channels,'' in {\em Research in
  Security and Privacy, 1994. Proceedings., 1994 IEEE Computer Society
  Symposium on}, pp.~56--64, May 1994.

\bibitem{kiyavash2009J}
N.~Kiyavash, T.~Coleman, and M.~Rodrigues, ``Novel shaping and
  complexity-reduction techniques for approaching capacity over queuing timing
  channels,'' in {\em Communications, 2009. ICC '09. IEEE International
  Conference on}, pp.~1--5, June 2009.

\bibitem{kiyavash2009A}
N.~Kiyavash and T.~Coleman, ``Covert timing channels codes for communication
  over interactive traffic,'' in {\em Acoustics, Speech and Signal Processing,
  2009. ICASSP 2009. IEEE International Conference on}, pp.~1485--1488, April
  2009.

\bibitem{verdubitsq}
V.~Anantharam and S.~Verdu, ``Bits through queues,'' {\em Information Theory,
  IEEE Transactions on}, vol.~42, no.~1, pp.~4--18, 1996.

\bibitem{dunn2009}
B.~P. Dunn, M.~Bloch, and J.~N. Laneman, ``Secure bits through queues,'' in
  {\em Networking and Information Theory, 2009. ITW 2009. IEEE Information
  Theory Workshop on}, pp.~37--41, IEEE, 2009.

\bibitem{soltani2018allerton}
R.~Soltani, D.~Goeckel, D.~Towsley, and A.~Houmansadr, ``Fundamental limits of
  covert bit insertion in packets,'' in {\em 2018 56th Annual Allerton
  Conference on Communication, Control, and Computing (Allerton)}, IEEE, 2018.

\bibitem{soltani2019dissertation}
R.~Soltani, {\em Fundamental Limits of Covert Communication in Packet
  Channels}.
\newblock {D}octoral {D}issertation. 1475, University of Massachusetts Amherst,
  2019.
\newblock \url{https://scholarworks.umass.edu/dissertations_2/1475}.

\bibitem{soltani2017towards}
R.~Soltani, D.~Goeckel, D.~Towsley, and A.~Houmansadr, ``Towards provably
  invisible network flow fingerprints,'' in {\em 2017 51st Asilomar Conference
  on Signals, Systems, and Computers}, pp.~258--262, Oct 2017.

\bibitem{soltani2018fundamental}
R.~Soltani, D.~Goeckel, D.~Towsley, and A.~Houmansadr, ``Fundamental limits of
  invisible flow fingerprinting,'' {\em arXiv preprint arXiv:1809.08514}, 2018.

\bibitem{soltani2018fundamentalbit}
R.~Soltani, D.~Goeckel, D.~Towsley, and A.~Houmansadr, ``Fundamental limits of
  covert bit insertion in packets,'' in {\em 2018 56th Annual Allerton
  Conference on Communication, Control, and Computing (Allerton)},
  pp.~1065--1072, IEEE, 2018.

\bibitem{cormen2009introduction}
T.~H. Cormen, {\em Introduction to algorithms}.
\newblock MIT press, 2009.

\bibitem{lehmann2005testing}
E.~L. Lehmann and J.~P. Romano, ``Testing statistical hypotheses (springer
  texts in statistics),'' 2005.

\bibitem{karr1991point}
A.~Karr, {\em Point processes and their statistical inference}, vol.~7.
\newblock CRC press, 1991.

\bibitem{GrunerDKL}
C.~Gruner and D.~Johnson, ``Calculation of the kullback-leibler distance
  between point process models,'' in {\em Acoustics, Speech, and Signal
  Processing, 2001. Proceedings. (ICASSP '01). 2001 IEEE International
  Conference on}, vol.~6, pp.~3437--3440 vol.6, 2001.

\bibitem{samuels1974characterization}
S.~Samuels, ``A characterization of the {Poisson} process,'' {\em Journal of
  Applied Probability}, vol.~11, no.~1, pp.~72--85, 1974.

\bibitem{ferreira2000pairs}
J.~Ferreira, ``Pairs of renewal processes whose superposition is a renewal
  process,'' {\em Stochastic processes and their applications}, vol.~86, no.~2,
  pp.~217--230, 2000.

\bibitem{iglehart1970multiple}
D.~L. Iglehart and W.~Whitt, ``Multiple channel queues in heavy traffic.{ I},''
  {\em Advances in Applied Probability}, vol.~2, no.~1, pp.~150--177, 1970.

\bibitem{iglehart1972extreme}
D.~L. Iglehart {\em et~al.}, ``Extreme values in the {GI/G/1} queue,'' {\em The
  Annals of Mathematical Statistics}, vol.~43, no.~2, pp.~627--635, 1972.

\bibitem{kullback1968information}
S.~Kullback, {\em Information Theory and Statistics}.
\newblock Dover Publications, 1968.

\bibitem{gurland1954regularity}
J.~Gurland, ``On regularity conditions for maximum likelihood estimators,''
  {\em Scandinavian Actuarial Journal}, vol.~1954, no.~1, pp.~71--76, 1954.

\bibitem{cox1962renewal}
D.~R. Cox, D.~R. Cox, D.~R. Cox, and D.~R. Cox, {\em Renewal theory}, vol.~1.
\newblock Methuen London, 1962.

\bibitem{shiryaev1996probability}
A.~N. Shiryaev, ``Probability, volume 95 of,'' {\em Graduate texts in
  mathematics}, p.~81, 1996.

\end{thebibliography}

\end{document}